\crefname{algocf}{Alg.}{Algs.} 
\Crefname{algocf}{Alg.}{Algs.}
\crefname{algorithm}{Alg.}{Algs.}
\Crefname{algorithm}{Algorithm}{Algorithms}
\crefname{ALC@unique}{alg.}{algs.}
\Crefname{ALC@unique}{Algorithm}{Algorithms}
\setlist[description]{leftmargin=1em}
\newlength{\setupLength}
\newlength{\queryLength}
\algnewcommand\send{\textbf{Send}\xspace}
\algrenewcommand\textproc{\textsf}
\algrenewcommand\algorithmicwhile{\textbf{while}}
\algrenewcommand\algorithmicfor{\textbf{for}}
\algrenewcommand\algorithmicforall{\textbf{forall}}
\algrenewcommand\algorithmicfunction{\textbf{Function}}
\algrenewcommand\algorithmicprocedure{\textbf{Procedure}}
\algrenewcommand\algorithmicreturn{\textbf{return}}
\algrenewcommand\algorithmicelse{\textbf{else}}
\algrenewcommand\algorithmicif{\textbf{if}}
\algnewcommand\algorithmicforeach{\textbf{for each}}
\renewcommand{\secpar}{\ensuremath{\kappa}}
\newcommand{\db}{\ensuremath{\mathcal{D}}\xspace}
\newcommand{\sk}{\ensuremath{\mathsf{sk}}\xspace}
\newcommand{\domain}{\ensuremath{\mathcal{X}}\xspace}
\newcommand{\client}{\ensuremath{\mathcal{C}}\xspace}
\newcommand{\server}{\ensuremath{\mathcal{S}}\xspace}
\newcommand{\ds}{\ensuremath{\mathcal{DS}}\xspace}
\newcommand{\op}{\ensuremath{\mathsf{op}}\xspace}
\newcommand{\ops}{\ensuremath{\mathsf{ops}}\xspace}
\newcommand{\upt}{\ensuremath{\mathsf{updt}}\xspace}
\newcommand{\isrt}{\ensuremath{\mathsf{insert}}\xspace}
\newcommand{\del}{\ensuremath{\mathsf{delete}}\xspace}
\newcommand{\upd}{\ensuremath{\mathsf{update}}\xspace}
\newcommand{\setup}{\ensuremath{\Pi_{\mathsf{setup}}}\xspace}
\newcommand{\query}{\ensuremath{\Pi_{\mathsf{query}}}\xspace}
\newcommand{\update}{\ensuremath{\Pi_{\mathsf{update}}}\xspace}
\newcommand{\access}{\ensuremath{\mathcal{AP}}\xspace}
\newcommand{\hashmap}{\ensuremath{\mathsf{map}}\xspace}
\newcommand{\tags}{\ensuremath{\mathsf{tags}}\xspace}
\newcommand{\pancake}{\textsc{Pancake}\xspace}
\newcommand{\ie}{\textit{i}.\textit{e}.\xspace}
\newcommand{\eg}{\textit{e}.\textit{g}.\xspace}
\newcommand{\system}{\textsc{Swat}\xspace}
\newcommand{\para}[1]{{\bf \noindent #1 }} 
\newcommand{\epsolute}{\ensuremath{\mathcal{E}}psolute\xspace}
\newtheorem{theorem}{Theorem} 
\newtheorem{definition}{Definition}
\pgfplotsset{compat=newest}
\tikzset{->-/.style={decoration={
            markings,
            mark=at position #1 with {\arrow{>}}},postaction={decorate}}}
\newcommandx*{\sendmessageboth}[2][1=<->]{%
    \sendmessage{#1}{#2}%
}
\newcommand\sendmessageboth*[2][\pcdefaultmessagelength]{%
    \begingroup%
        \renewcommand{\@pcsendmessagetop}{\let\halign\@pc@halign$\begin{aligned}#2\end{aligned}$}%
        \sendmessage{<->}{length=#1}%
    \endgroup%
}
\newcommand{\myappend}{\text{Appendix}\xspace}
\newcommand{\revise}[1]{#1}
\begin{document}
\title{\texorpdfstring{\system}{SWAT}: A System-Wide Approach to Tunable Leakage Mitigation in Encrypted Data Stores}

\author{Leqian Zheng}
\orcid{0000-0003-3772-4400}
\affiliation{%
  \institution{City University of Hong Kong}
}
\email{leqian.zheng@my.cityu.edu.hk}

\author{Lei Xu}
\orcid{0000-0001-9178-6640}
\affiliation{%
  \institution{Nanjing University of Science and Technology}
}
\email{xuleicrypto@gmail.com}

\author{Cong Wang}
\orcid{0000-0003-0547-315X}
\affiliation{%
  \institution{City University of Hong Kong}
}
\email{congwang@cityu.edu.hk}

\author{Sheng Wang}
\affiliation{%
  \institution{Alibaba Group}
}
\email{sh.wang@alibaba-inc.com}
\author{Yuke Hu}
\affiliation{%
  \institution{Zhejiang University}
}
\email{yukehu@zju.edu.cn}
\author{Zhan Qin}
\affiliation{%
  \institution{Zhejiang University}
}
\email{qinzhan@zju.edu.cn}

\author{Feifei Li}
\affiliation{%
  \institution{Alibaba Group}
}
\email{lifeifei@alibaba-inc.com}
\author{Kui Ren}
\affiliation{%
  \institution{Zhejiang University}
}
\email{kuiren@zju.edu.cn}

\begin{abstract}
  Numerous studies have underscored the significant privacy risks associated with various leakage patterns in encrypted data stores.
  While many solutions have been proposed to mitigate these leakages, they either (1) incur substantial overheads, (2) focus on specific subsets of leakage patterns, or (3) apply the same security notion across various workloads, thereby impeding the attainment of fine-tuned privacy-efficiency trade-offs.
  In light of various detrimental leakage patterns, this paper starts with an investigation into which specific leakage patterns require our focus in the contexts of key-value, range-query, and dynamic workloads, respectively.
  Subsequently, we introduce new security notions tailored to the specific privacy requirements of these workloads.
  Accordingly, we propose and instantiate \system, an efficient construction that progressively enables these workloads, while provably mitigating system-wide leakage via a suite of algorithms with tunable privacy-efficiency trade-offs.
  We conducted extensive experiments and compiled a detailed result analysis, showing the efficiency of our solution.
  \system is about an order of magnitude slower than an encryption-only data store that reveals various leakage patterns and is two orders of magnitude faster than a trivial zero-leakage solution.
  Meanwhile, the performance of \system remains highly competitive compared to other designs that mitigate specific types of leakage.

\end{abstract}

\maketitle

\section{Introduction}\label{sec:introduction}

With the advent of cloud computing, many companies and institutions are outsourcing databases and workloads from their private data centers to the cloud. While this transition offers advantages such as increased availability, scalability, and cost-effectiveness, it also exposes users to potential privacy breaches and data abuse. 
Consequently, there has been significant progress %
in constructing \emph{encrypted databases} to prevent adversaries with high privileges or even physical access to the server from learning sensitive data. 
Specifically, one line of work~\citep{Popa2011CryptDBPC, Poddar2016ArxAS} leverages specialized cryptographic primitives to perform different operations over ciphertexts. 
Another prosperous line of work~\citep{Bajaj2014TrustedDBAT, Zheng2017OpaqueAO, Priebe2018EnclaveDBAS, Gribov2019StealthDBAS, Eskandarian2019ObliDBOQ, Antonopoulos2020AzureSD, Sun2021BuildingES, Wang2022OperonAE} we will follow utilizes trusted execution environments (TEE), \eg, Intel SGX and AMD SEV, to process confidential data as plaintext in an isolated and protected approach.

Unfortunately, despite powerful enclaves, recent studies show that encrypted databases still exhibit diverse leakage patterns, including %
1) memory access pattern indicates which memory blocks are accessed, potentially revealing sensitive information like the access frequency of encrypted records;
2) volume pattern refers to the size of the query result set, which can be easily obtained by observing network communication;
3) order pattern refers to the ordinal relationship between data, which may be inferred from data storage (\eg, encrypted albeit sequentially stored data) or memory accesses (\eg, a search over a B+ tree directly reveals the exact ordinal relationship between nodes along the path); 
4) query correlation pattern reveals how queries are correlated, \eg, humans or workloads often generate queries based on previous ones;
and 5) operation timestamp pattern denotes when the encrypted database is accessed. These leakage patterns pose risks for adversaries to recover confidential queries or data through leakage attacks~\citep{Islam2012AccessPD, Cash2016LeakageAbuseAA, Grubbs2017LeakageAbuseAA, Roessink2020ExperimentlRI, Grubbs2018PumpUT, Gui2019EncryptedDN, Blackstone2020RevisitingLA, KornaropoulosPT21, GuiPT23, Lacharit2018ImprovedRA, Grubbs2019LearningTR, KornaropoulosPT19, Oya2021IHOPIS, Kamara2023Maple, Grubbs2016BreakingWA, Wang2021DPSync, xu2023leakage}.

Given its critical importance, many solutions~\citep{Zheng2017OpaqueAO, Mishra2018OblixAE, Kamara2019ComputationallyVS, Eskandarian2019ObliDBOQ, Patel2019MitigatingLI, Grubbs2020PancakeFS, ZhaoWL21, AmjadPPYY21, Qin2022Adore} have thus been proposed to thwart these attacks by eliminating or mitigating these leakage patterns.
From the perspective of performance, attaining full leakage suppression for even a single pattern entails substantial overhead, some of which are even insurmountable. 
For instance, the well-known logarithmic lower bounds for ORAM~\citep{Goldreich1987TowardsAT, Larsen18YesORAMLowerBound}) almost make it theoretically infeasible to provide obliviousness in large-scale databases. 
And full query decorrelation in a known query distribution has been shown in~\cite{Grubbs2020PancakeFS} to be as hard as the offline ORAM.
Fortunately, it is unnecessary to offer full leakage suppression in many scenarios since (1) the adversary's auxiliary information about the encrypted data and/or query workloads is practically biased or distorted, and (2) an adversary has to accumulate sufficient leakage to launch risky leakage attacks. 
For instance, some known-data attacks~\cite{Islam2012AccessPD, Blackstone2020RevisitingLA, Roessink2020ExperimentlRI} assume explicit knowledge of a (probably large) subset of the encrypted data or queries, which seems too strong in reality.
\citet{Kellaris2016GenericAO} show that an adversary needs $\Omega(n^4)$ range queries (exposing access patterns) to reconstruct the exact value of every record in an encrypted database of size $n$. 
And recovering data values with a relative error $\epsilon$ needs $\Omega(\epsilon^{-4})$ queries~\citep{Grubbs2019LearningTR}.
Hence, mitigating partial yet significant leakage with manageable performance overheads is destined and admissible. 
Considering diverse and complex deployment scenarios, such a trade-off between performance and security should ideally be tunable. 

From the perspective of system leakage, most countermeasures protect only subsets of leakage patterns. 
For instance, ObliDB~\citep{Eskandarian2019ObliDBOQ} provides a set of customized oblivious operators to conceal memory access patterns across various query workloads, yet it ignores hiding the sizes of the intermediate and result tables (\ie, volume pattern). 
Most volume-hiding solutions~\citep{Kamara2019ComputationallyVS, Patel2019MitigatingLI, AmjadPPYY21, ZhaoWL21} ignore the query equality pattern, which indicates whether two queries repeat. 
\pancake~\cite{Grubbs2020PancakeFS} mitigates the access pattern via smoothing the access frequency to entries in an encrypted data store, while the exposed query correlation pattern has been shown to be vulnerable to IHOP attack~\cite{Oya2021IHOPIS}. 
It is hence essential to consider system-wide leakage while designing encrypted data stores.
This problem is more pronounced in leakage mitigation schemes, where relaxing protection over existing leakage may inadvertently expose new and detrimental leakage patterns.
A notable example is that relaxing oblivious access to frequency-smoothed access introduces the query correlation pattern~\cite{Grubbs2020PancakeFS, Oya2021IHOPIS}. 

From the perspective of protection applied, existing solutions typically apply a uniform security notion over all supported workloads. 
This approach simplifies the comprehension of the system's security but impedes the attainment of fine-tuned and improved privacy-efficiency trade-offs. 
For instance, Adore~\cite{Qin2022Adore} protects a set of common workloads in relational databases in a differentially oblivious approach, which ensures that the access pattern complies with the differential privacy notion. 
However, some workloads, such as table joins that do not preserve neighboring outputs (as discussed in~\cref{sec:relatedWork}), may not align well with such a protection measure.
Besides, \epsolute~\cite{Bogatov2021Epsolute} applies the same differentially private sanitizer to both point and range queries, while we could leverage a more efficient scheme~\cite{Patel2019MitigatingLI} to hide the volume pattern in point queries. 
It is hence valuable to identify the nuanced privacy requirements inherent in each specific workload, with the primary challenge being that efficiently and securely accommodating a new workload 
may necessitate modifications to the existing ones.

\subsection{Our Contributions}\label{subsec:contribution}
Drawing on the above insights, we investigate the system-wide approach towards tunable leakage mitigation by following recent enclave-based encrypted data stores~\citep{Eskandarian2019ObliDBOQ, Gribov2019StealthDBAS, Antonopoulos2020AzureSD, Wang2022OperonAE}. 
To this end, we present \system, an efficient encrypted data store that \revise{\emph{progressively}} supports key-value, range-query, and dynamic workloads with tunable system-wide leakage mitigation.
We adopt a widely accepted assumption~\cite{Grubbs2020PancakeFS, Bogatov2021Epsolute} that posits the existence of a \emph{trusted client proxy}.
The proxy is responsible for routing client queries to the enclave deployed on the cloud server via a secure and authenticated channel.
We additionally assume a \emph{dedicated communication channel} as the billing model based on network bandwidth (rather than data transferred) usage per month (or year), which we denote as pay-by-bandwidth, is a standard practice in cloud services\footnote{Examples in alphabetical order include \href{https://www.alibabacloud.com/help/en/elastic-compute-service/latest/public-bandwidth}{Alibaba Cloud}, \href{https://aws.amazon.com/directconnect/pricing/?nc=sn&loc=3}{AWS Direct Connect}, \href{https://azure.microsoft.com/en-us/pricing/details/expressroute/}{Azure ExpressRoute}, and \href{https://www.tencentcloud.com/document/product/213/10578}{Tencent Cloud}.}.

Our work starts from the key-value workload due to its simplicity, where keys and equal-length values are protected by pseudorandom functions and authenticated encryption. 
We mitigate the primary leakage, \ie, access and query correlation patterns, via frequency smoothing and partial query decorrelation respectively. 
We adopt \pancake for the former one due to its noteworthy gains in balancing performance and security. 
\pancake provides provable assurance of achieving a uniform access frequency for each entry in the key-value store, irrespective of their original access distribution, as long as each query is generated independently.
However, it falls short in protecting encrypted data stores when queries are correlated~\cite{Oya2021IHOPIS}.
We hence devise an elegant and almost-for-free security patch, modeled as $\theta$-query decorrelation, to mitigate this leakage pattern. 
This technique also holds potential for broader applications in other searchable encryption systems.

We then extend \system to handle range queries that inherently expose more leakage, such as order and volume patterns. 
To system widely address these leakage patterns, we introduce a formal security notion aimed at reducing leakage of range query systems to that of the previous stage (\ie, key-value stores). 
Intuitively, no adversary under this notion can distinguish between a sequence of data accesses from range queries (sampled from an arbitrary distribution) and those from uniformly random sampling. 
To achieve this, we develop an efficient protocol that sets up the encrypted data store by partitioning the input dataset into buckets without revealing their order, and handles queries by accessing the data store at a fixed rate and retrieving a fixed number of buckets in each access.
This protocol effectively suppresses order, volume, and search timestamp patterns, with reasonable monetary cost (thanks to the pay-by-bandwidth billing model). 

Subsequently, we introduce a data-structure dynamization technique to enable updates over the encrypted data store.
In order to maintain query efficiency, the data store requires a well-structured, albeit hidden from anyone but the trusted proxy (for security), search index. 
Unless properly mitigated, the memory access pattern during index updates will expose sensitive information regarding the underlying structure of the encrypted data.
We capture such a dominant privacy demand by formulating a differential obliviousness notion~\cite{ChanCMS19} since it offers principled privacy-efficiency trade-offs. 
We also evolve a $k$-way differentially oblivious merge algorithm from the $2$-way one~\citep{ChanCMS19}, which serves as the foundation of dynamization, to offer better privacy guarantees. 
Moreover, we improve the practical performance of the differentially oblivious merge algorithm by notably reducing the size of its oblivious buffer without hurting the security guarantees it claims. 

We then implement an end-to-end system \system that realizes the above functionalities on top of Intel SGX. 
\system is about $10.6\times$ slower than an encryption-only database that exposes all detrimental leakage patterns and $31.6\times$ faster than a trivial solution that eliminates all these patterns.
We also compare it to ObliDB~\cite{Mishra2018OblixAE}, the state-of-the-art oblivious database, and \epsolute~\cite{Bogatov2021Epsolute}, the state-of-the-art range-query system mitigating the volume pattern and eliminating the memory access pattern.
The result shows that our design provides competitive performance while mitigating system-wide leakage patterns.  
We also run extensive experiments with various settings and compiled a detailed result analysis. 

We summarize our contributions in this work as follows:
\begin{itemize}[topsep=0pt, partopsep=0pt, leftmargin=10pt]
    \item We customize a set of security models to capture varying privacy requirements in key-value, range-query, and dynamic workloads.
    \item We present \system, an efficient design that \emph{progressively} enables these workloads,
    \revise{offering tunable privacy-efficiency trade-offs and strategies for their systematic organization and integration.} 
    \item We implement \system and empirically evaluate its performance over an extensive set of settings with a detailed results compilation showing the efficiency of our construction and the tunable privacy-efficiency trade-offs.
\end{itemize}

\section{Background}\label{sec:preliminaries} 
In this section, we first describe an outsourced data store system adapted from~\cite{Bogatov2021Epsolute}.
Then we introduce the threat model and formal security notions that capture different privacy requirements across various workloads. 
\subsection{Syntax and System Model}
Without the loss of generality, we abstract a data store as a collection of $n$ records $r$, each with a search key $\sk$: $\db = \allowbreak\{(\sk_1,\allowbreak r_1),\allowbreak \dots,\allowbreak (\sk_n,\allowbreak r_n)\}$. 
We assume that search keys take value from a \emph{well ordered} domain $\domain=\set{1, \dots, N}$ for $N\in\mathbb{N}$, and all records have the same fixed bit-length. 
We describe the model for a single indexed attribute for ease of presentation and discuss how it can be extended to support multiple attributes. 

We explicitly distinguish operations $\op$ over the data store as queries $q$ and updates $u$. 
A query is a predicate $q:\domain\to\set{0, 1}$ to be evaluated on $\db$. 
It results in a set $q(\db)=\set{r_i: q(\sk_i)=1}$ containing all records whose search keys are evaluated to be true.
This work focuses on the following types of queries (adapted from~\cite{Bogatov2021Epsolute}).
\noindent\emph{Range query.}
A range query $q_{[x, y]}(a)$ associated with an interval $[x, y]$ is evaluated to $1$ iff $x\leq a\leq y$.
The equivalent SQL query is:

\texttt{\small
SELECT * FROM tab WHERE attr BETWEEN x AND y.
}

\noindent\emph{Point query.}
A point query $q_{x}(a)$ associated with an element $x\in\domain$ is evaluated to $1$ iff $x=a$.
The equivalent SQL query is:

\texttt{\small
SELECT * FROM tab WHERE attr = x.
}

An update operation, denoted as $u=(\upt, \sk, r),\allowbreak \upt\in\{\isrt,\allowbreak \del,\allowbreak \upd\}$, performs one of three actions on the data store $\db$ that results in an updated data store $\db'$. 
According to the update type, we have:

\noindent\emph{Insertion.}
An insertion $u=(\isrt, \sk, r)$ results in $\db'=\db \cup \set{(\sk, r)}$. 
The equivalent SQL statement is: 

\texttt{\small
INSERT INTO tab VALUES ($\sk$, $r$).
}

\noindent\emph{Deletion.}
A deletion $u=(\del, \sk, \bot)$ results in a (probably) new database $\db'\subseteq\db$ such that for all $(\sk_i, r_i)\in\db\setminus\db'$, we have $\sk_i=\sk$. 
The equivalent SQL statement is: 

\texttt{\small
DELETE FROM tab WHERE attr = \sk.
}

\noindent\emph{Update.}
An update $u=(\upd, \sk, r')$ results in a new database $\db'$ such that for all $(\sk_i, r_i)\in \db$, we have $(\sk_i, r_i)\in\db'$ if $\sk_i\neq \sk$ and $(\sk_i,r')\in\db$ if $\sk_i=\sk$. 
The equivalent SQL statement is:

\texttt{\small
UPDATE tab SET r = $r'$ WHERE attr = \sk.
}

\para{Outsourced dynamic data store (ODDS).} An ODDS consists of three protocols between two \emph{stateful} parties: a client \client and a server \server (adapted from~\cite{Bogatov2021Epsolute}).

\noindent\emph{Setup protocol} \setup: \client takes as input a database $\db$ (and parameters for other purposes); \server takes no input. 
\client has no output (except its state); \server outputs a data structure \ds.

\noindent\emph{Query protocol} \query: \client has a query $q$; \server has as input \ds. 
\client outputs $q(\ds)$; \server has no output. 
Both may update internal states.

\noindent\emph{Update protocol} \update: \client has an update $u$; \server has input \ds. 
\client has no formal output; \server outputs an updated data structure $\ds'$. 
Both may update their internal states.

\para{Correctness.}We require that for any database and any operation sequence consisting of queries and updates, it holds that running \setup, and then \query and \update on the corresponding inputs, \query outputs the correct results except with negligible probability over the coins of the above runs. 

\para{Efficiency.}We measure the efficiency of an ODDS from the following perspectives: 
1) \emph{Storage efficiency} measures the bit lengths of an ODDS in \client and \server, including their states.
The storage complexity of \client should be significantly smaller than the bit length of the data store $|\db|$.
2) \emph{Communication efficiency} measures the necessary network bandwidth of an ODDS rather than the bit lengths of data transferred between \client and \server.
We prefer the bandwidth metric as pay-by-bandwidth is a common billing model in cloud services (as discussed in~\cref{subsec:contribution});
3) \emph{Search time efficiency} measures the time span between when a client issues a query and when it receives the corresponding results.
4) \emph{Update time efficiency} measures an insertion, deletion, or updation's (amortized) processing time in \server. 

\subsection{Threat Model and Security Definitions}\label{subsec:securityDef}

\para{Threat model.}We use Intel SGX as an example of hardware enclaves to discuss our threat model. 
Intel SGX offers confidentiality and integrity of data and codes inside its protected memory, \ie, enclave page caches (EPC).
An enclave is defined by user-level or operating system code, initiated by loading a verifiable compiled library, and interacted via well-defined functions.
EPC is ``uni-directly'' accessible, \ie, codes inside EPC can access the entire address space (except those belonging to other enclaves), but the others cannot access EPC. 
SGX enables a remote system to verify what code is loaded into an enclave and set up a secure communication channel with the enclave via remote attestation. 
Furthermore, 
the capacity of EPC is highly limited (\ie, 128MB in total and less than 100 MB available) compared to the untrusted memory. 
SGX v2 offers much richer EPC resources (up to 512 GB per processor) by dropping the integrity guarantee inside the enclave. 
We also take limited EPC into account while \revise{implementing} \system.

Similar to prior works \citep{Mishra2018OblixAE, Eskandarian2019ObliDBOQ, Antonopoulos2020AzureSD, Sun2021BuildingES}, we assume an \emph{honest-but-curious adversary} with the power to continuously inspect network communication, untrusted memory, and disk, and data transferred inside the system bus (a.k.a., persistent adversary). 
In particular, both data flowing through the data bus and memory addresses carried in the address bus could be observed by adversaries. 
The latter one exposes the memory access patterns to both trusted and untrusted memory~\cite{Bulck2017TellingYS, Xu2015ControlledChannelAD, Mishra2018OblixAE}. 
Adversaries can also leverage arbitrary auxiliary information (\eg, a subset of encrypted data or queries, or a probably biased distribution of data records or client queries) to recover data or queries.

Timing side channels and power analysis are orthogonal to our work similar to previous studies~\cite{Mishra2018OblixAE, Eskandarian2019ObliDBOQ, Antonopoulos2020AzureSD, Sun2021BuildingES}. 
Denial of service attacks, which can be easily launched by a privileged attacker against an enclave, are out of scope since it does not compromise user privacy. 
Although there are several side-channel attacks against SGX upon speculative execution, branching history, or page faults~\cite{brasser2017software, lee2017inferring, Bulck2017TellingYS, Weichbrodt2016AsyncShockES, Xu2015ControlledChannelAD}, effective approaches \cite{Rane2015RaccoonCD,  Shinde2016PreventingPF, Seo2017SGXShieldEA, Shih2017TSGXEC} have been proposed to mitigate these attacks and we can employ a more secure SGX implementation if necessary. 

\para{Frequency smoothing.}To hide the access \emph{distribution} of items in an encrypted key-value store, \citet{Grubbs2020PancakeFS} proposed a security notion named ``real-or-random indistinguishability under chosen (dynamic) distribution attack''. 
Informally, it requires that no adversary is able to distinguish whether a sequence of accessed items are queried by clients or randomly sampled from a uniform distribution, \ie, any characteristic distribution over encrypted items that contains sensitive information will be smoothed to a uniform one. 

\pancake achieves this goal by selective replication upon initialization and batched query strategy with fake queries. 
Specifically, selective replication creates copies of items that are more likely to be accessed, and accesses one of them if queried. 
High likelihood is hence amortized to the average. 
It then creates fake queries for items that are less likely to be accessed, which hence raises the low likelihood to the average. 
Batches are introduced to ensure that fake queries are indistinguishable from real ones, and that real queries will be answered timely.
The storage overhead caused by replicas could be bounded by a constant. 
One may also trade storage overhead $\alpha$, which is defined as the total number of replicas divided by the original count, for reduced communication overhead via fewer fake queries (or vice versa). 
The authors use $\alpha=2$ as default in their design and experiments. 

\para{Query decorrelation.}As discussed earlier, \pancake is vulnerable when queries are correlated. 
Rather than a full query decorrelation notion that is as hard as offline ORAM~\citep{Grubbs2020PancakeFS}, we propose the following partial query decorrelation notion, wherein we require the current query to exhibit independence from a minimum of $\theta$ previous queries.
It trivially captures the decorrelation requirement since independence implies zero correlation. 
\begin{definition}[$\theta$-query decorrelation]\label{def:decorrelation}For a discrete-time stochastic query process $\{X_t\in \mathcal{Q}:t\geq 0\}$ where $\mathcal{Q}$ denotes the countable set of possible queries, we say that it satisfies $\theta$-query decorrelation if for all $t\in\mathbb{N}_+$ and $q_{0}, \dots, q_{t-1}\in\mathcal{Q}$, there exists $S\subseteq [t]$ with $t^{\prime}=|S|$ and $t^{\prime}\leq \max(t-\theta, 0)$ such that $\Pr\bigl[X_{t}=q_{t}|X_0=q_{0}, \dots, X_{t-1}=q_{t-1}\bigr]=\Pr\bigl[X_{t}=q_{t}|X_{s_0}=q_{s_0}, \dots, X_{s_{t^{\prime}-1}}=q_{s_{t^{\prime}-1}}\bigr].$
\end{definition}
\para{Range or random point query indistinguishability.}
We propose a formal security model that captures the indistinguishability between data accesses from range queries (generated from a specific distribution) and data accesses from uniformly random sampling (detailed in \myappend).
Achieving this security goal rules out attacks based on order or volume patterns, as such leakage patterns do not even exist in individual data access systems (\ie, the key-value stores) with values of the same length.

\para{Differential obliviousness (DO).}DO~\cite{Kellaris2016GenericAO, WaghCM16, wagh2018differentially, ChanCMS19, PersianoY19, Persiano2022LowerBF} essentially requires that the memory access pattern of an algorithm or data structure complies with the well-known differential privacy \cite{Dwork2006Calibrating, Dwork2014Algorithmic} notion, which protects the privacy of individuals in published results by adding noise to the data. 
Since the cloud service provider is untrusted, we require the access patterns observed during the data store operation rather than the outputs of computation, to satisfy differential privacy. 

Two operational sequences $\ops$ and $\ops'$ consisting of the same number of queries are called neighboring if they differ in exactly one position $i$, and both are of the same update query type (insertion or deletion). 
Indeed, such two neighboring sequences denoted by $\ops\sim\ops'$ over the same setup dataset will result in two neighboring data stores that differ in exactly one record. 
\begin{figure*}
    \centering
    \includegraphics[width=0.9\textwidth]{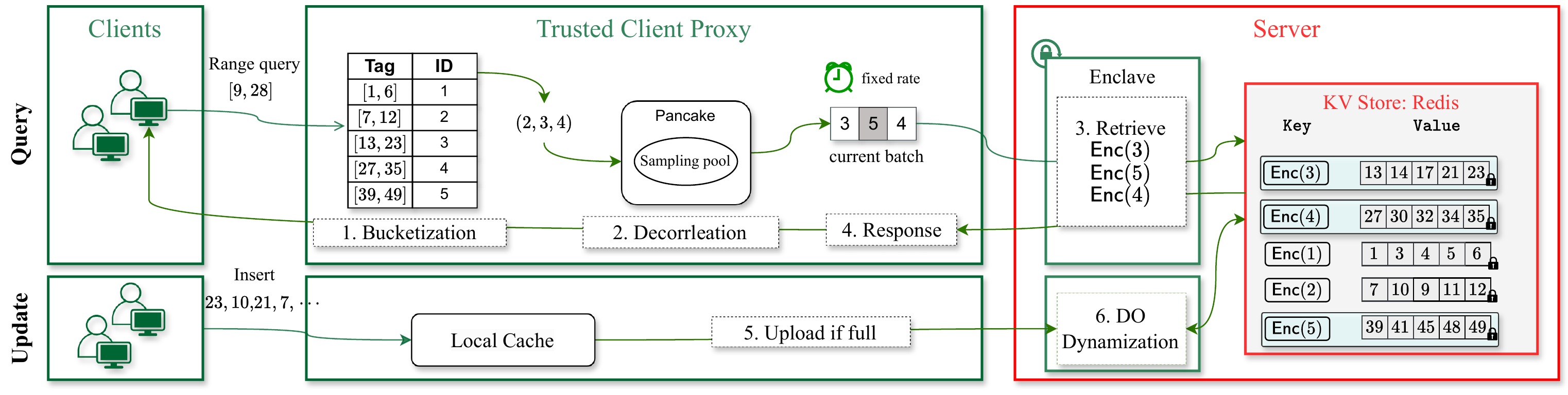}
    \caption{
    Upon receiving a query, the client proxy splits it into bucket accesses based on tags prepared by the cloud server's enclave. These accesses are added to a sampling pool, from which a batch of bucket IDs is randomly selected. The proxy also filters false positives in those buckets and returns the final results to the client.
    Upon receiving an update, the client proxy caches it until a bucket of updates accumulates. Subsequently, the client proxy sorts these updates and uploads them to the server's enclave. The enclave then updates the encrypted data store in a differentially oblivious way.
    }
    \label{fig:baseComponent}
\end{figure*} 

\begin{definition}\label{def:DODS} 
We say that a dynamic outsourced data store $\Pi$ is $(\varepsilon, \delta)$-differentially oblivious with respect to updates (a.k.a DO$_{\text{update}}$-ODDS) if for any data store $\db$ and any two query-consistent neighboring operational sequences $\ops\sim\ops'$, and any possible set of memory access patterns $S$ (adapted from~\cite{ChanCMS19}): 
\begin{equation*}
  \Pr[\access_{\Pi}(\db, \ops)\in S]\leq \text{exp}({\varepsilon})\cdot \Pr[\access_{\Pi}(\db, \ops')\in S] + \delta.
\end{equation*}
\end{definition}
The parameter $\varepsilon$ as a privacy loss metric adjusts the efficiency-privacy trade-off. 
The parameter $\delta$ allows for a negligible probability when the bound $\varepsilon$ fails to hold, and allowing such a negligible failure probability is essential to improve system performance, as shown in~\cite{ChanCMS19}. 
The random variable $\access_{\Pi}(\db, \ops)$ denotes the distribution of access patterns incurred by the system $\Pi$ over $\db$ and $\ops$.

\section{\texorpdfstring{\system}{Swat} Design}\label{sec:range2kv}

\system is constructed in a progressive approach that enables key-value~\cref{subsec:queryDecorrelation}, range-query~\cref{subsec:rangeQuerySupport}, and dynamic~\cref{subsec:dynamization} workloads with emphasis on efficiently mitigating system-wide leakage.
Before delving into concrete construction, we first provide an overview of \system, the one supporting all the above workloads, to help better understand how it works. 

\subsection{Overview} 
We provide an illustration of \system's workflow in~\cref{fig:baseComponent}. 
As introduced above, \system assumes a \emph{trusted} proxy that routes queries from clients to the untrusted cloud server. 

\para{Workflow.}\system sets up an encrypted data store over a sorted input dataset by dividing it into buckets of the same size, randomly shuffling those buckets to hide the order pattern. 
Similar to \pancake~\cite{Grubbs2020PancakeFS}, each bucket will be encrypted by an authenticated encryption scheme $E$ and inserted into the backend KV store with an identifier generated by a secretly keyed pseudorandom function $F$. 
Upon receiving a range query from a client, the client proxy will partition it into several bucket requests according to their tags (step 1).  
It then puts bucket requests into a sampling pool for future processes (step 2). 
At each predefined time unit, it samples a batch of pending buckets as well as fake buckets and retrieves them from the backend KV store on the cloud server via a secure enclave (step 3). 
Once a range query receives all requested buckets, correct answers will be replied to the corresponding client by filtering the false positives (step 4). 
Updates over the encrypted data store are more complicated. 
The client proxy will maintain a small local buffer to cache newly inserted records, which will be uploaded to the cloud server as it fills up (step 5). 
The secure enclave in the untrusted server will then fetches necessary encrypted data, decrypt it inside its private memory, and merge several sorted record arrays into a new one in a differentially oblivious manner (step 6). 
Finally, the client proxy will update the necessary local states accordingly. 

\para{Parameters.}We also brief system parameters to promote a high-level understanding of how they tune the performance and security of \system.
A larger $\theta$ (\cref{subsec:queryDecorrelation}), which denotes the minimum number of items pending in the sampling pool, implies a stronger decorrelation effect that comes at the cost of reduced search time efficiency.
A larger bucket size $Z$ (\cref{subsec:rangeQuerySupport}) signifies fewer states cached in the client proxy but more false positives retrieved from the server. 
A large privacy loss $\varepsilon$ as well as a smaller security parameter $\secpar$ (\cref{subsec:dynamization}) allows better update performance at the cost of less obliviousness. 
In addition, a larger $k$ (\cref{subsec:dynamization}), which denotes the number of components for dynamization, improves the update time efficiency but results in a worse search time efficiency and a larger privacy loss. 
Ideally, one should select a suit of parameters and conduct a benchmark test in the intended deployment environment to identify the most effective parameters, which may vary due to factors such as storage access times and network round trip times (RTTs). 

\subsection{$\theta$-query Decorrelation in Frequency-Smoothed KV Stores}\label{subsec:queryDecorrelation}
As outlined in~\cref{subsec:securityDef}, the primary leakage sources in a key-value store with a uniform value length are access frequency and query correlation patterns. 
The former could be efficiently fixed via the frequency smoothing technique~\cite{Grubbs2020PancakeFS}. 
However, this technique offers limited protection against query correlation, primarily due to its reliance on a \emph{queue} to cache pending items for future batches. 
Specifically, once a query arrives, the client proxy will randomly choose a replica of the key, add it to the pending queue, and prepare a batch of accesses to the cloud server. 
For each access in the batch, the proxy randomly flips a coin to determine whether it is fake or real, where fake accesses are generated according to a pre-settled distribution. 
The proxy retrieves elements from the queue as real accesses if the queue is not empty. Otherwise, it will simulate a real query by sampling a key (replica) based on its real distribution. 

\setlength{\textfloatsep}{0pt}
\begin{algorithm}[t]
    \centering
    \caption{Sampling Pool with $\theta$-Decorrelation in \client}\label{alg:samplingPool}
    \small
    \begin{minipage}[t]{0.49\linewidth}
        \begin{algorithmic}[1] 
            \Statex\underline{$\mathsf{Setup}(\theta, \hat{\pi}, \mathsf{fn})$:}
            \Statex \hspace{0em}\Comment{\footnotesize$\hat{\pi}$: the real distribution of replicas} 
            \Statex \hspace{0em}\Comment{\footnotesize$\mathsf{fn}$: the update policy for weights} 
            \State $\hashmap\gets\emptyset$ \Comment{\footnotesize hash table}
            \State $\mathsf{items}\gets[]$ \Comment{\footnotesize item array} 
            \State $\mathsf{w}\gets[]$ \Comment{\footnotesize sampling weights}
            \State $i\gets 0$ 
            \While{$i<\theta$}
            \State $r_{\mathsf{rep}}\gets \hat{\pi}$
            \If{$\mathsf{Put}(r_{\mathsf{rep}})$}
            \State $i\gets i+1$
            \EndIf 
            \EndWhile
        \end{algorithmic} 
    \end{minipage}
    \begin{minipage}[t]{0.49\linewidth}
        \begin{algorithmic}[1]
            \Statex \underline{$\mathsf{Put}(r_{\mathsf{rep}})$:}
            \If{$\hashmap.\mathsf{has}(r_{\mathsf{rep}})$}
            \State \Return false
            \EndIf{}
            \State $\hashmap.\mathsf{insert}(r_{\mathsf{rep}})$ 
            \State $\mathsf{items}.\mathsf{add}(r_{\mathsf{rep}}), \mathsf{w}.\mathsf{add}(1)$ 
            \State \Return true 
        \end{algorithmic}
        
        \begin{algorithmic}[1] 
            \Statex\underline{$\mathsf{Get()}$:}
            \State $i\gets \mathsf{sampleIdx}(\mathsf{items}, \mathsf{w})$
            \State $\hashmap.\mathsf{remove}(\mathsf{items}[i])$
            \State $\mathsf{items}.\mathsf{del}(i),  \mathsf{w}.\mathsf{del}(i)$
            \State $\mathsf{w}\gets\mathsf{fn}(\mathsf{w})$\Comment{\small update weights} 
            \State $i\gets |\mathsf{items}|$
            \State Execute line $5\sim 8$ of $\mathsf{Setup}$
        \end{algorithmic} 
    \end{minipage} 
\end{algorithm}

The key insight here is that the \emph{first-in-first-out} feature of the pending queue completely preserves the original query correlation. 
To alleviate this issue, 
\revise{%
\system{} introduces a modification in \pancake{} by replacing the queue with a sampling pool. 
This adaptation allows for the random selection of pending items in the sampling pool \emph{independently}. 
Consequently, correlations among \emph{queries pending in the pool} are eliminated due to the inherent property of independence that implies zero correlation.
}
We further introduce the following two new features to provide greater flexibility in tuning efficiency and privacy:
    1) Minimum size $\theta$ that we impose on the sampling pool is an important parameter to adjust the trade-off. 
    To meet the minimum size requirement, \system pads the sampling pool with queries yielded by the real distribution (\ie, simulating client behavior). 
    Increasing $\theta$ can reduce query correlation, albeit with a potentially higher search time efficiency.
    2) Customized sampling weight update policy allows \client to dynamically adjust sampling weights over time in diverse manners. 
    For example, weight increasing over time allows a quasi-FIFO effect. 
    Namely, items arriving earlier (first-in) will have a higher weight and are, therefore, more likely to be sampled in the current batch (first-out). 
    It yields improved search time efficiency, provided that intrinsic query correlation remains within an acceptable security range.

\para{Write support and linearizability.}
\pancake~\cite{Grubbs2020PancakeFS}, as the base of our design, supports writes to keys in the KV store via a standard read-and-write technique in the ORAM literature~\cite{Shi2020PathOH}, \ie, each access consists of a read followed by a write\revise{, collectively forming a transaction}. 
Specifically, since only sampled replicas will be written in each batch access, the proxy will maintain an \textsf{UpdateCache} to track which replicas of a key need to be updated in the future in the form of $k\rightarrow (v, \texttt{UpdateMap})$. 
\texttt{UpdateMap} denotes which replicas of $k$ have been updated or need to be updated in the future. 
In each batch access, \pancake will consult \textsf{UpdateCache} to ensure that updated values propagate.
Upon updating all replicas of $k$ to \server, its corresponding entries in \textsf{UpdateCache} will then be removed.

\revise{
In \system, as accesses are randomly sampled from the pool, it is crucial to establish linearizability~\cite{Herlihy1990linearization}, \ie, a database consistency guarantee ensuring that each operation appears to occur atomically and in accordance with the \emph{real-time} ordering.  
We must ensure that a \textsf{Get}($k$) on a key $k$, either before or after a write operation \textsf{Put}($k$) on that key, accurately reflects the value $v$.
Considering that \system samples \emph{key replicas} rather than the underlying read/write operations,  we assign to each pending key $k$ a list $\mathsf{lst}_k$ that tracks the \emph{yet-to-be responded} operations to $k$ in \emph{real-time ordering}, exemplified by $k\rightarrow (\textsf{read}_1, \textsf{read}_2, \textsf{write}_1(v_1), \textsf{read}_3, \textsf{write}_2(v_2))$.

Upon receiving the result of each batch access from \server, the proxy operates on each key $k$ in the batch as follows. 
We denote the value received from \server as $v_0$. 
The proxy first consults \textsf{UpdateCache}[$k$], inherited from \pancake, to verify if there is a cached update to $k$.
If such an update exists, we replace $v_0$ with the cached value.
The proxy then scans the list $\mathsf{lst}_k$ to address the pending operations as follows.
For each read operation in $\mathsf{lst}_k$, the proxy responds with the current value $v_c$. 
When encountering a write operation in $\mathsf{lst}_k$, the proxy updates $v_c$ with the written value and continues.
Upon completing all operations in $\mathsf{lst}_k$, the proxy verifies if $v_c$ equals $v_0$.
If this condition holds, it writes back to \server a re-encrypted $v_0$. 
Otherwise, it overwrites the corresponding value in \textsf{UpdateCache}[$k$] with $v_c$ and writes back to \server an encryption of $v_c$. 
The proxy will also update \textsf{UpdateMap} following the approach used by \pancake.

In the above example with an initially empty \textsf{UpdateCache}[$k$], the proxy first samples a replica of $k$ (rather than an operation on $k$) and obtains its value $v_0$ from \server.  
Then it responds to $\textsf{read}_1$ and $\textsf{read}_2$ with $v_0$, and to $\textsf{read}_3$ with $v_1$. 
After completing all operations in $\mathsf{lst}_k$, the proxy will update \textsf{UpdateCache}[$k$] with $v_2$ and a properly configured $\mathsf{UpdateMap}$.
Subsequently, it will write to a replica in \server with the encryption of $v_2$.
In short, as \system samples \emph{key replicas rather than operations on keys}, and the proxy will respond to operations on keys in real-time ordering as described above, we can claim that \system correctly establishes linearizability. 
}

The design outlined above is formally referred to as $\theta$-decorrelation, as depicted in~\cref{alg:samplingPool}.
The following theorem establishes its security. %
\begin{theorem}\label{thm:decorrelation}
\system with an unweighted sampling policy achieves $\theta$-query decorrelation in~\cref{def:decorrelation}.
\end{theorem}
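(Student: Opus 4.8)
The plan is to show that the stochastic query process that the cloud server actually observes — namely, the sequence of sampled key-replica accesses emitted by \cref{alg:samplingPool} with the unweighted (uniform) sampling policy — satisfies \cref{def:decorrelation} with parameter $\theta$. The key structural fact to exploit is the invariant maintained by the algorithm: immediately after every $\mathsf{Get}()$ call, the sampling pool (tracked by $\hashmap$, $\mathsf{items}$, and $\mathsf{w}$) is repopulated via lines $5$–$8$ of $\mathsf{Setup}$ until it again contains at least $\theta$ distinct pending items, each drawn i.i.d.\ from the real replica distribution $\hat{\pi}$. So I would first establish this invariant formally by induction on the number of $\mathsf{Get}()$ calls: the $\mathsf{Setup}$ loop establishes it initially, and each $\mathsf{Get}()$ removes exactly one item and then re-runs the same padding loop, restoring $|\mathsf{items}|\ge\theta$.

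Next, I would analyze a single emitted access $X_t$. At the time $X_t$ is produced, the pool contains some multiset of pending items; under the unweighted policy $\mathsf{sampleIdx}$ picks one of them uniformly at random, independently of all randomness used to populate the pool and independently of the coins of the frequency-smoothing batch mechanism (real/fake coin flips). The crucial point is where those pending items \emph{came from}. An item currently in the pool either (i) was injected by a $\mathsf{Put}$ triggered by a client query, or (ii) was injected by the $\hat{\pi}$-padding step (simulated client behavior), which is by construction independent of everything. I would argue that among the last $\theta$ emitted accesses $X_{t-\theta},\dots,X_{t-1}$, at least... well, the cleaner route: define $S\subseteq[t]$ to be the set of indices $s<t$ such that the access $X_s$ was drawn from a pool state that is probabilistically determined \emph{before} the item sampled as $X_t$ entered the pool. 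Because each $\mathsf{Get}()$ refills the pool to size $\ge\theta$ using fresh independent draws, and uniform sampling is memoryless, the item that becomes $X_t$ and the sampling decisions producing the $\theta$ most recent prior accesses can be made to involve disjoint sources of randomness; conditioning on the outcomes of those $\theta$ recent accesses therefore does not change the conditional law of $X_t$. Formally this gives $|S|\le\max(t-\theta,0)=t'$ and
\begin{equation*}
\Pr[X_t=q_t\mid X_0=q_0,\dots,X_{t-1}=q_{t-1}]=\Pr[X_t=q_t\mid X_{s_0}=q_{s_0},\dots,X_{s_{t'-1}}=q_{s_{t'-1}}],
\end{equation*}
which is exactly \cref{def:decorrelation}.

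I expect the main obstacle to be making precise the claim that the $\theta$ most-recently-emitted accesses are independent of the item that becomes $X_t$. The subtlety is that a single client query's replica can in principle linger in the pool across many $\mathsf{Get}()$ rounds (it is only removed when it happens to be sampled), so one cannot naively say ``$X_t$ depends only on the last $\theta$ $\mathsf{Put}$s.'' The resolution is to track item identities through the pool: because $\mathsf{sampleIdx}$ is uniform and each $\mathsf{Get}()$ refills to $\ge\theta$ distinct items, for the specific item realized as $X_t$ there is a well-defined ``round'' at which it was sampled, and everything emitted in the preceding $\theta$ rounds drew from pools whose composition is conditionally independent of that item's value given the history encoded by $S$ — I would phrase this via a coupling / re-randomization argument on the uniform sampler. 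A secondary, easier obstacle is confirming that the frequency-smoothing layer (fake queries, batching, \textsf{UpdateCache}) does not re-introduce correlation: since fake accesses are i.i.d.\ from a fixed distribution and the real/fake coin is independent of the pool, they only add independent noise, so the bound carried through the real-access sub-sequence extends to the full observed sequence. Finally I would note where unweightedness is essential: a non-uniform weight policy (e.g.\ the quasi-FIFO one discussed above) makes the probability of sampling $X_t$ depend on \emph{when} items entered, reintroducing dependence on arbitrarily old queries and breaking the clean $\theta$-bound — which is precisely why the theorem is stated for the unweighted policy.
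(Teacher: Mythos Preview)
The paper's own proof is a single sentence: ``\Cref{thm:decorrelation} holds trivially as all queries pending in the pool, which contains at least $\theta$ queries, are independently and uniformly sampled, and independence implies zero correlation.'' That is the entire argument.

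Your proposal and the paper share the same core idea --- the pool invariant $|\mathsf{items}|\ge\theta$ together with uniform independent sampling is what yields the conclusion --- but you have gone far beyond the paper in attempting to make this rigorous. The induction on $\mathsf{Get}()$ calls, the case split on item provenance, the coupling argument for lingering items, and the check that the batching/fake-query layer does not reintroduce dependence are all your additions; the paper supplies none of this and simply asserts the result. You also correctly flag a genuine subtlety the paper ignores (an item can sit in the pool across many rounds, so one cannot naively say $X_t$ depends only on the last $\theta$ insertions), and you correctly isolate why the unweighted policy is needed. In short, your route is not different in spirit, but it is substantially more careful than what the paper actually provides; if anything, your proposal is the proof the paper's one-liner gestures at.
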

\begin{proof}
\Cref{thm:decorrelation} holds trivially as all queries pending in the pool, which contains at least $\theta$ queries, are independently and uniformly sampled, and independence implies zero correlation.
\end{proof}
\para{Efficiency.} Note that the runtime complexity of Put operation is $\bigO{1}$ since a hash table supports \revise{constant-time} insertions and removals. 
$\mathsf{Get}$ operation could also be done in \revise{constant} time \revise{when employing} an unweighted sampling policy (\ie, not weight updates) by observing that removing the $i$-th element from a vector could be done by two operations: 1) swapping the $i$-th element and the last one; 2) discarding the last element. 
\revise{However, if weighted sampling with dynamic weights is considered, the time complexity of $\mathsf{Get}$ would be $\bigO{\theta}$ due to the possible linear scan of the weight vector.}
The storage overhead the sampling pool introduces is proportional to the number of pending queries, which is normally negligible compared to the local states for maintaining replica distributions. 

\revise{
We remark that our design is \emph{almost for free} from the following two perspectives. 
In terms of implementation, it requires minimum intrusion on \pancake, as it only requires substitution of the queue with a sampling pool, both of which expose identical interfaces.
Regarding performance, \system with a uniform sampling policy ensures that \textsf{Put} and \textsf{Get} operations impose only a small constant computation overhead on the client proxy. 
We can hence effectively alleviate query correlation leakage with minimum efforts. 
}

\para{Broader interests.}
This design is also compatible with \textsc{Shortstack}~\cite{Vuppalapati2022SHORTSTACKDF}, which is a distributed and fault-tolerant extension of \pancake.
Furthermore, this technology may help mitigate leakage patterns in searchable symmetric encryptions (SSE), where a client searches over the encrypted index on the cloud for documents containing the target keyword, and then retrieves the matching documents accordingly. 
\citet{GuiPPW20} highlighted the leakage patterns in the file retrieval phase. 
Meanwhile, the direct application of leakage-suppression techniques for encrypted indices on file retrievals fails to provide practical performance.
We claim that this issue could be mitigated by adopting our decorrelation technique as follows:
    1) Obtain file IDs by sending (probably fake) search tokens to the cloud (where the \emph{leakage-suppresion} encrypted index takes effect); 
    2) $\mathsf{Put}$ IDs into the sampling pool of an appropriate size $\theta$, where fake IDs are obtained by sending fake tokens rather than directly generating IDs;
    3) $\mathsf{Get}$ (sample) a batch of IDs and send them to the server to retrieve corresponding files. 

The key insight here is that we are able to obfuscate the query boundaries for pending keyword queries, since documents for different keywords may be retrieved in the same batch. 
The co-occurrence pattern, which many attacks~\cite{Islam2012AccessPD, Cash2015LeakageAbuseAA, Pouliot2016TheSN, Blackstone2020RevisitingLA} rely on, will no longer be present for those pending searches. 
Another minor performance benefit is query deduplication, \ie, pending queries targeted for the same item will result in only one data retrieval. 
It would bring more benefits for keyword searches, where querying for ``synonym''s like ``crypto'' and ``encryption'' will result in a notably less total number of file retrievals.

\subsection{Nearly Zero-Leakage Range Query Support}\label{subsec:rangeQuerySupport}
Here we present how to efficiently support nearly zero-leakage range queries based on the above design. 
We emphasize that even an oblivious design without appropriate padding still reveals the volume pattern, which has been demonstrated to be catastrophic under certain adversarial models~\cite{Grubbs2018PumpUT, Gui2019EncryptedDN, KornaropoulosPT21}.
We will go through our construction in terms of leakage patterns exposed in different stages as well as the mitigations we adopt. 

\para{Order leakage in data storage.}
Clearly, it is inevitable to store data in an ordered format for efficient range queries. 
We emphasize here that data should be stored in a way that no one but the trusted part can learn how they are ordered. 
Otherwise, a snapshot adversary could infer the order pattern according to the physical addresses of the data.
Oblivious shuffling (via sorting with random weights) together with a local position map of the data entries could easily fix it within $\bigO{n\log n}\sim \mathcal{O}({n\log^2 n})$ time. 
In specific, the functional goal of oblivious sorting is to take an array $\mathbf{a}$ of $n$ elements as input and output an array $\mathbf{a}'$ that is a permutation $\Pi:\allowbreak [n] \allowbreak\to \allowbreak[n]$ of $\mathbf{a}$, \ie, $\mathbf{a}'_i=\mathbf{a}_{\Pi(i)}$, such that $\mathbf{a}'_1\leq \cdots\leq \mathbf{a}'_n$. 
The obliviousness requires that the distributions of memory access patterns produced by two input arrays of the same length are indistinguishable from each other.

To better boost the performance of our system, we also investigate the actual performance of existing oblivious sorters to select the most efficient one. 
We choose the classic bitonic sort algorithm with $\bigO{n\log^2 n}$ running time and brief the rationals as follows. 
One may choose to skip this part without impeding their comprehension of subsequent sections if the details of the analysis leading to this conclusion are of no interest.
Though sorting networks such as AKS~\cite{Ajtai1983An0L} and Zig-zag sort~\cite{Goodrich2014ZigzagSA} have a better asymptotic time $\bigO{n\log n}$, the huge hidden constants result in extremely poor practical performance. 
Randomized Shellsort~\cite{Goodrich2011RandomizedSA} with a practical time complexity suffers a non-negligible error probability $\bigO{n^{-3}}$ that may incur other security concerns. 
\citet{Asharov2020BucketOS} proposed another bitonic-resembling sorting algorithm requiring $\bigO{n/Z\cdot \log n}$ rounds of interaction, where $Z$ as a security parameter is usually taken to be a few hundred. 
It is hence not the perfect candidate since the interactive process is sensitive to network latency. 
While the non-interactive transformation discussed in~\cite{Asharov2020BucketOS} assumes an inherently oblivious buffer inside the enclave of size $2Z$ that does not fit our threat model. 
Simulating the small oblivious buffer with $O(\log Z)$ multiplicative overhead will bring a significant performance penalty. 
One can also leverage a path oblivious heap~\cite{Shi2020PathOH} to implement an oblivious sorter. 
But it is non-trivial to parallelize a heap-based sorting algorithm while the bitonic sort could be easily parallelized. 
Besides, there does exist a sorting algorithm~\cite{ChanCMS19} relaxing the security to differential obliviousness (DO) with a potentially better asymptotic time complexity. 
However, we claim that so far it may have only theoretical interest.
Specifically, they propose a stable $(\varepsilon, \delta)$-DO compact algorithm with 
\begin{sloppy} 
$\bigO{n \cdot \log( \varepsilon^{-1} \cdot \log^{1.5}n \cdot  \log\delta^{-1})}$
\end{sloppy} time complexity. 
They then extend it to a $k$-bit-key $(\varepsilon, \delta)$-DO sorter following the approach of the Radix sort.
It overcomes the well-known $\bigO{n\cdot\log n}$ comparison-based sorting barrier only if $k=\smallO{\log n/\log\log n}$ and $\varepsilon=\bigO{1}$. 
However, the former short-key condition is not that common in real-world applications. 
In addition, their $k$-bit DO sorter demands a \emph{stable} oblivious sorter to accomplish their goal, which further complicates the whole procedure since none of the aforementioned oblivious sorters is inherently stable.

\para{Frequency and order leakage in data access.}Range queries inherently exhibit characteristic data access frequencies.
For instance, uniformly random range queries over the domain $\set{1,\dots, N}$ will access the value $1\leq x\leq N$ with probability $p(x)=2x (N + 1 - x) / ( N ( N + 1 ) )$. 
Several query recovery attacks~\cite{Lacharit2018ImprovedRA, Grubbs2019LearningTR} rely heavily on this frequency leakage. %

Another important leakage is that data tend to be accessed sequentially without proper protection. 
It directly reveals the exact order among those data records. 
Fortunately, frequency smoothing combined with our almost-for-free query decorrelation technique allows us to mitigate such leakages effectively. 
For a range query $[l, r]$, we simply put all keys $x\in[l,r]$ into the sampling pool and query them in batch. 
Unweighted random sampling will produce a sequence of data accesses equivalent to random shuffling, which hence fully hides the order leakage. 
It will, of course, introduce notable performance overhead since the result might be very sparse compared to the queried range. 
In addition, it cannot be directly applied to range queries over real numbers due to an infinite number of keys in between or innumerable keys for the floating type.
We will address these issues shortly. 

\para{Volume, order, and timestamp leakage in data transition.}Dealing with volume pattern leakage is more challenging in data transition for the following reasons: 1) Full padding implies pulling the entire database or prohibitive overhead if the maximum possible result set size is huge. 
2) Padding in a differentially private way may not provide sufficient protection. 
3) Replying with a subset of results introduces unacceptable false negatives in various scenarios, or replying with a sequence of subsets to avoid false positives merely coarsens the volume pattern to the subset size level. 

Na\"ively asking for multiple batches consecutively does not perfectly fix the issues in the third approach, as an adversary could easily infer that buckets in a short time window indeed compose a range query and are \emph{in order}. 
Meanwhile, another important leakage we must consider is the search timestamp pattern. 
By assuming a dedicated fixed-bandwidth communication channel between \client and \server (explained in~\cref{subsec:contribution}), we are able to eliminate the above leakage patterns simultaneously by augmenting \system with fixed-rate bucket retrievals. 
We delay the formal security guarantees and first introduce the complete protocol. 

To improve efficiency, we preprocess the (sorted) data store in a way that $\db$ will be divided into buckets of the same size $Z$, where each bucket will be associated with a tag denoting the range of data it contains. 
\client will maintain tags (and other states for frequency smoothing) for buckets rather than individual search keys, resulting in significant storage space savings of $Z\times$.
The only marginal modification to the query procedure, as depicted in~\cref{alg:range2kv}, is that \client has to partition a range query into a sequence of bucket requests according to the tags. 
The proxy will collect necessary buckets, filter false positives, and return the correct results.  

\begin{algorithm}[t]
    \caption{Bucketization}\label{alg:range2kv}
    \small
    \begin{algorithmic}[1] 
            \Statex\underline{$\mathsf{Bucketize}(\mathsf{arr}, Z)$:} \Comment{For \server:} 
            \State $n\gets |\mathsf{arr}|, B\gets \lceil n/Z\rceil$ 
            \State $\mathsf{arr}.\mathsf{add}(\infty, \dots, \infty)$ 
            \Comment{\footnotesize append $B\cdot Z-n$ dummies}
            \State Split $\mathsf{arr}$ into $B$ buckets of size $Z$ as $\mathsf{bkt}$
            \State $\tags\gets[], \mathsf{pendingQ}\gets[]$ 
            \For{$i\gets 1,\dots, B$} 
            \State $l_i\gets \mathsf{bkt}[i][1], r_i\gets \mathsf{bkt}[i][Z]$
            \State $\tags.\mathsf{add}([l_i, r_i]), \mathsf{pendingQ}.\mathsf{add}(\emptyset)$ 
            \EndFor 
            \State $\mathsf{w_{shuffle}}\sample \mathbb{Z}^{B},\mathsf{OSort}(\mathsf{bkt}, \mathsf{w_{shuffle}})$
            \State { $\mathsf{bkt}', \pi_f, R\gets\pancake.\mathsf{Init}(\hat{\pi},\allowbreak \mathsf{bkt},\allowbreak \alpha)$}
            \State Insert $\mathsf{bkt}'$ into the backend KV store 
            \State \textbf{send} $\tags, \mathsf{pendingQ},\pi_f, R$ to \client, who invokes $\mathsf{Pool.Setup}$ 
    \end{algorithmic} 
    \dotfill 
    \\
    \begin{minipage}[t]{0.48\linewidth}
       Upon receiving a query:
        \begin{algorithmic}[1] 
            \Statex\underline{$\mathsf{Partition}(q=[l, r])$:}
            \State $b_{l}\gets\max\limits_{\mathsf{t}\in\tags}{\mathsf{t}.r< q.l} + 1$
            \State $b_{r}\gets\min\limits_{\mathsf{t}\in\tags}{\mathsf{t}.l> q.r} - 1$ 
            \If{$b_r<b_l$}
            \State \Return \Comment{\footnotesize no candidate}
            \EndIf
            \State $q.cnt\gets b_{r}- b_{l} +1$
            \State $q.data\gets\emptyset$
            \For{$i\gets b_{l},\dots,  b_{r}$} 
            \State $\mathsf{Pool}.\mathsf{Put}(i)$
            \Comment{\footnotesize \cref{alg:samplingPool}}
            \State $\mathsf{pendingQ}[i].\mathsf{add}(q)$
            \EndFor 
        \end{algorithmic} 
    \end{minipage} 
    \begin{minipage}[t]{0.48\linewidth}
    Upon retrieving $i$-th bucket:
        \begin{algorithmic}[1] 
            \Statex\underline{$\mathsf{Reply}(i, \mathsf{data})$:}
            \For{$q\in\mathsf{pendingQ}[i]$} 
            \State $q.data\xleftarrow{\cup } \mathsf{data}$
            \State $q.cnt\gets q.cnt - 1$
            \If{$q.cnt\gets 0$}
              \State {filter $q.data$ and send it back to the client}
            \EndIf 
            \EndFor
            \State $\mathsf{pendingQ}[i]\gets\emptyset$
        \end{algorithmic} 
    \end{minipage} 
\end{algorithm}

\para{Derive the distribution for each bucket being queried.}After transforming the input dataset into buckets, it is necessary to estimate the bucket access distribution to smooth the frequency of bucket accesses.  
We \revise{adopt an assumption similar to \pancake~\citep{Grubbs2020PancakeFS}} that the client proxy has an estimated distribution $\hat{\pi}$ of the true range query distribution $\pi$\revise{, while \citet{Grubbs2020PancakeFS} also discussed how to obtain or estimate such prior knowledge on the query distribution}.
This means that the client proxy is aware of the probability $p_{[x,y]}$ of querying a range $[x, y]$ where $x$ and $y$ are within the domain $\domain$ and $x \leq y$. 
We also assign $p_{[x, y]}$ with $0$ for $y< x$ for convenience. 
Then the probability of a bucket $b$ that contains data from $l$ to $r$ (inclusive) being contained in a result set $\mathsf{S}$ is trivially given by
\begin{equation}\label{eq:bucketDist}
\Pr[b\in \mathsf{S}] = 1-\sum_{i=1}^{l-1}\sum_{j=1}^{l-1}p_{[i,j]} - \sum_{i=r+1}^{N}\sum_{j=r+1}^{N}p_{[i,j]}.
\end{equation}

To compute the query probability for each bucket efficiently, we derive the cumulative distribution as $p_{_{\leq}[x,y]}=\sum_{i=1}^{x}\sum_{j=1}^{y}p_{[i,j]}$ in $\bigO{N^2}$ time. 
We could hence compute it as 
\begin{align*}
    &\Pr[b\in \mathsf{S}]\\
    =&1-p_{_{\leq}[l-1,l-1]}- (p_{_{\leq}[N,N]} - p_{_{\leq}[r,N]} - p_{_{\leq}[N,r]} + p_{_{\leq}[r,r]}) \\ 
    =&p_{_{\leq}[r,N]} + p_{_{\leq}[N,r]} - p_{_{\leq}[r,r]} - p_{_{\leq}[l-1,l-1]}.
\end{align*}
We also show how to efficiently compute bucket access probabilities in the context of uniform range queries, which is commonly assumed by certain attacks~\cite{Kellaris2016GenericAO, Lacharit2018ImprovedRA, Grubbs2019LearningTR}. 
Specifically, if range queries are sampled uniformly at random, as assumed by certain attacks~\cite{Kellaris2016GenericAO, Lacharit2018ImprovedRA, Grubbs2019LearningTR}, each bucket will be queried with a probability of 
\begin{equation}\label{eq:uniformBucketDist}
    \Pr[b\in\mathsf{S}]=\frac{r(2N-r+1)-l(l-1)}{N(N+1)},
\end{equation}
by substituting $p_{[x,y]} $ with $\frac{2}{N(N+1)}, \forall 1\leq x\leq y \leq N$ since there are $N(N+1)/2$ possible distinct range queries. 

\system ensures ROR-CRDA security, rendering adversaries unable to distinguish between requests to \server as point queries or range queries. 
The formal theorem and its detailed proof are provided in~\myappend.

\para{Efficiency.}$\mathsf{Bucketize}$ runs in $\mathcal{O}({n\allowbreak \cdot \log^2 (n/Z)})$ time due to the oblivious shuffling process. 
Note that there are $\mathcal{O}(n/Z\allowbreak \cdot\allowbreak  \log^2 (n/Z))$ compare-and-swap operations in shuffling, with each swap operation between two buckets taking $\bigO{Z}$ time (\ie, the bucket size). 
The overall running time is obtained by multiplying them together.

$\mathsf{Partition}$ runs in $\bigO{\log(n/Z)}$ as buckets covering the query range could be found by binary searches over $\lceil n/Z\rceil $ buckets. 
Line 5 of the $\mathsf{Reply}$ function dominates its running time due to a linear scan to filter out false positives. 
It runs in $\bigO{\nu\cdot Z}$ where $\nu$ denotes the total number of buckets it queries for. 
Therefore, the total query time complexity is $\bigO{\log(n/Z) +\nu\cdot Z}$. 
We note that a larger bucket size will reduce partition time, but increase filtering time.

\subsection{Differentially Oblivious Dynamization}\label{subsec:dynamization}
There are two general approaches to enable dynamic workloads on existing \system. 
\revise{The first approach is to enable the search index to accommodate newly updated data \emph{in place}.}
However, such solutions suffer from either (1) limited dynamics (\eg, \pancake~\cite{Grubbs2020PancakeFS}, SEAL~\cite{Demertzis2020SEAL}) 
\revise{with an upper bound on the total number of data entries fixed on the setup phase. 
Namely, one can only insert a restricted number of entries or replace outdated entries with new ones without expanding the entire data store;
or (2) expensive, \ie, \emph{super-logarithmic} overhead, such as oblivious search trees~\cite{Roche2016APO, Mishra2018OblixAE} and differentially oblivious variants~\cite{WaghCM16}. 
This is due to the fact that \emph{in-place} updates require both reads to identify the appropriate positions for new values, and writes to record the new values and potentially restructure the index. 
}

We circumvent the above predicament via the other approach, data-structure dynamization~\cite{Bentley79, Bentley1980DecomposableSP, MathieuRYY21}. %
It refers to the process of transforming a \emph{static} data structure into a \emph{dynamic} one that supports arbitrarily intermixed insertions and searches. 
Specifically, insertions are supported by destructing old static components and rebuilding them into a new one. 
\citet{Bentley1980DecomposableSP} proposed a dynamization technique, called $k$\emph{-binomial transform}, to maintain $k$ components at all times of respective sizes $\binom{D_1}{1}, \binom{D_2}{2}, \dots, \binom{D_k}{k}$, 
where $0\leq D_1 < D_2 <\cdots<D_k$. 
The $i$-th component will be empty if $D_i<i$.
Such a unique decomposition is guaranteed to exist. 
The read amplification caused by querying (at most) $k$ components is hence independent of the data scale. 
An example is given in~\cref{fig:kBinTrans}.

\begin{figure}[b]
    \centering
    \includegraphics[width=\columnwidth]{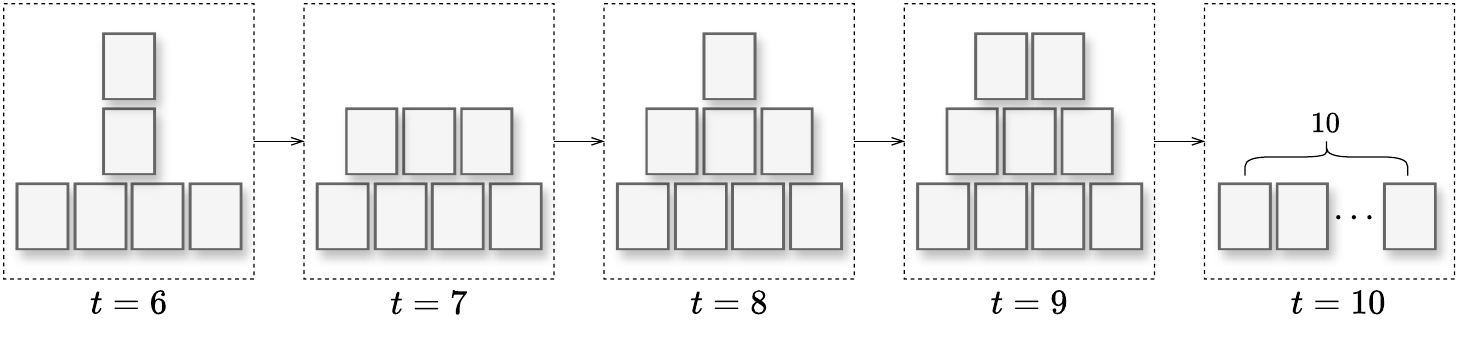}
    \caption{An example of $3$-binomial transform. 
    At each step $t$, layers from top to bottom hold $\binom{D_1}{1}$, $\binom{D_2}{2}$, and $\binom{D_3}{3}$ items, respectively, resulting in a total of $t$ items. 
    We have $D_1=2$, $D_2=3$, and $D_3=4$ for $t=9$. 
    Inserting one more element will trigger the destruction of all three layers and rebuild them into a new one in the third layer, with $D_1=0,\ D_2=1,\ D_3=5$. }
    \label{fig:kBinTrans}
\end{figure}
Since destruction can be easily performed by retrieving bucketized data from the backend storage, the main challenge remains to design a \emph{secure and efficient} rebuild algorithm.
The functionality could be abstracted as \emph{merging} (at most) $k$ sorted arrays into a single sorted one.
The security requires that the rebuilding process should not leak any damaging memory access pattern. 
We hence capture such a demand via differential obliviousness, which enjoys principled privacy-efficiency trade-offs.

\para{Differentially oblivious merge.}\citet{ChanCMS19} proposed an $(\varepsilon, \delta)$-differentially oblivious merge algorithm for two sorted vectors $(\mathsf{a}_0, \mathsf{a}_1)$ in $\mathcal{O}({(|\mathsf{a}_0|+|\mathsf{a}_1|)(\log{\frac{1}{\varepsilon}}+\log\log{\frac{1}{\delta}})})$ time, where neighboring inputs are two pairs of vectors $(\mathsf{a}_0, \mathsf{a}_1)\sim(\mathsf{a}'_0, \mathsf{a}'_1)$ that differ exactly in one element in total. %
Informally, it first allocates two arrays into two lists of bins of the same capacity $\Xi$ in a DO manner, where each bin loads a random number of real elements along with dummies for padding. 
The problem is then transformed into merging two lists of sorted bins and pruning those dummies. 
Intuitively speaking, obliviousness comes from a small \emph{oblivious buffer} when processing two lists, and the noisy number of elements under processing provides differentiality. 
See~\cite{ChanCMS19} for more details.

We note that the size $\Xi$ of the oblivious buffer directly affects performance, as accessing an oblivious buffer incurs a performance overhead of $O(\log \Xi)$.
Meanwhile, $\Xi$ should be large enough to ensure a negligible probability $\delta$ that DO does not hold. 
\citet{ChanCMS19} derived a theoretical lower bound on $\Xi=\Omega(\varepsilon^{-1}\log^5\secpar)$ to guarantee $\delta=\text{exp}({-\Theta{(log^2\secpar})})$, where $\secpar$ is the security parameter. 
In particular, we need to ensure that %
the sum of $B$ i.i.d. truncated Laplace random variables parameterized by $\Xi$ should be greater than $n$ with all but negligible probability. 

In our design, we always allocate at least $n:=Z$ elements into $\hat{B}:=\lceil\frac{2Z}{\Xi(1-\log^{-2}\secpar)}\rceil$ bins. 
Meanwhile, we can compute the distribution of the sum of $\hat{B}$ truncated Laplace random variables via the traditional convolution method.
Subsequently, we can employ a binary search to determine the minimum $\Xi$ that ensures the failure probability is less than $\delta$.
We showcase a notable improvement in reducing $\Xi$ using this numeric method in~\cref{fig:binCapcities}.

\para{$k$-way differentially oblivious merge.}There are two general approaches for merging $k$ sorted arrays in a DO manner: 
\begin{itemize}[leftmargin=*]
    \item 
    \emph{Direct $k$-way merge} imitates the aforementioned DO merge design as follows. 
    We first allocate $k$ arrays into $k$ bin lists; each bin will be tagged with a DP interior point.
    We then leverage a small oblivious buffer to help merge those bins. 
    We could conclude that the buffer should be of size $\bigO{(2k+1)\Xi}$ by following a similar analysis about the number of ``safe'' bins in~\cite{ChanCMS19}. 
    Each bin merging iteration will append elements from a bin to the buffer, obliviously sort the buffer, and remove ``safe'' elements from it. 
    It hence results in $\bigO{n/\Xi\cdot (2k+1)\Xi\cdot \log ((2k+1)\Xi)}$ running time with $n=\sum^{k}_{i=1}|\mathsf{a}_i|$, or $\bigO{kn(\log \frac{1}{\varepsilon} + \log\log \frac{1}{\delta})}$ by replacing $\Xi$ with concrete parameters. One possible optimization is to treat the buffer as an oblivious heap~\cite{Shi2020PathOH} such that inserting $\Xi$ elements in each bin merging iteration will lead to $\bigO{\Xi\cdot \log ((2k+1)\Xi)}$ running time. 
    It hence results in totally $\bigO{n\log{k}(\log \frac{1}{\varepsilon} + \log\log \frac{1}{\delta})}$ time. 
    \item \emph{Iterative merge} applies the DO merge algorithm on pairs of $k$ input arrays. 
    In $i$-th iteration, $(2j-1)$-th array will be merged with the $2j$-th array for $1\leq j\leq k/2$. 
    Each iteration will reduce the number of arrays to half until there is only one array left, which implies total $\lceil\log k\rceil$ iterations. 
    Since each input array will be revolved in one DO merge during each iteration, we obtain the total running time $\bigO{n\log k(\log \frac{1}{\varepsilon} + \log\log \frac{1}{\delta})}$. 
\end{itemize}

We choose the iterative merge as 1) the iterative merge algorithm could be easily parallelized while direct merge cannot; 2) the differential obliviousness for iterative merge could be easily obtained by the composition rule of differential obliviousness~\cite{ChanCMS19, Mingxun2022DOComposition, Zhou2023AdvancedCTDO}.
We notice that every element will be involved in $\log k$ instances of the DO merge algorithm, which hence results in $(\varepsilon\log k, \delta)$-DO. 
However, the DO guarantee of the direct merge algorithm requires in-depth formal analysis, which complicates the situation. 

\begin{algorithm}[t]
\caption{Differentially Oblivious Dynamization in \server}
\label{alg:DODynamizations}
\small

\begin{algorithmic}[1]
\Statex \underline{$\mathsf{Setup}(k)$:} %
\State $D\gets [0, \dots, k-1, \infty]$ \Comment{\footnotesize for $k$-binomial transform}
\State $\mathsf{cmpnt}\gets\emptyset$ 
\end{algorithmic}

\begin{algorithmic}[1]
\Statex \underline{$\mathsf{Update}( \mathsf{records})$:}
\Comment{\footnotesize records in order}
\State $i\gets 0$, $D_i \gets D_i+1$, $\mathsf{arrs}\gets \set{\mathsf{records}}$
\While{$D_i=D_{i+1}$} 
\State $\mathsf{arrs}\gets\mathsf{arrs}\cup \set{\mathsf{cmpnt}_i},  \mathsf{cmpnt}_i\gets \emptyset$
\Statex \Comment{\footnotesize by retrieving buckets based on labels in $\mathsf{cmpnt}_i$}
\State $D_{i+1}\gets D_{i+1}+1, D_i \gets i, i\gets i+1$ 
\EndWhile
\State $\mathsf{arrs}\gets\mathsf{arrs}\cup \set{\mathsf{cmpnt}_i}$
\State $\mathsf{arr}\gets \mathsf{KWayDOMerge}(\mathsf{arrs})$
\State $\mathsf{Bucketize}(\mathsf{arr}, Z)$ \Comment{\footnotesize \cref{alg:range2kv}, obtain $\mathsf{bkts}'$}

\State $\mathsf{cmpnt}_i\gets\mathsf{labels}(\mathsf{bkts}')$\Comment{\footnotesize bucket IDs} 

\end{algorithmic}
\dotfill 
\begin{algorithmic}[1]
    \Statex\underline{$\mathsf{Transform}(\mathsf{idx}, \mathsf{tags}_{new})$:} \label{alg:transform}
    \Comment{\footnotesize Pending query transformation in \client}
    \Statex For components from $1$ to $\mathsf{idx}$
    \For{$t\gets 1, \dots, |\mathsf{tags}_{old}|$} \Comment{\footnotesize $\mathsf{t}$ denotes $\mathsf{tags}_{old}[t]$ }
    \For{$q\in \mathsf{pendingQ}[t]$}
        \State $\mathsf{Partition}\left(\left[\max\left(q.l, \mathsf{t}.l\right), \min\left(q.r, \mathsf{t}.r\right)\right]\right)$ 
    \EndFor 
    \EndFor
\end{algorithmic}
\end{algorithm}

We then present the differentially oblivious dynamization algorithm in~\cref{alg:DODynamizations}.
\system fetches encrypted buckets in the pending components from the KV store and brings them into the enclave for decryption.
It then invokes the iterative $k$-way DO merge algorithm to obtain a sorted array that will be \emph{bucketized} accordingly. 

We notice that \system has to maintain structures introduced in~\cref{subsec:queryDecorrelation} and~\cref{subsec:rangeQuerySupport} on every living component (\ie, $\forall 1\leq i\leq k, i\leq D_i$) to ensure correctness.
It leads to new issues that need to be addressed properly and securely. 
Firstly, we notice that buckets of small components would be accessed more frequently. 
For instance, if two newly inserted records with the minimum and the maximum of the domain $\domain$ compose the only bucket in the first component, then every range query will ask for it.  
To alleviate such performance overhead, the client proxy maintains a local cache of size $Z$ (\ie, the bucket size). 
\client sends a bucket of sorted records together to \server only if the local cache is full. 

Secondly, we need to guarantee the correctness of queries in components to be destroyed that have not yet received responses from all pending buckets. 
Meanwhile, we should minimize the number of targeted buckets in the new component, aiming to introduce minimal performance overhead.
A na\"ive solution is to append pending queries of an old bucket with tag $\mathsf{t}_{old}$ to newly generated buckets with tags $\mathsf{tags}$ by invoking $\mathsf{Partition}(\mathsf{t}_{old})$ in~\cref{alg:range2kv}, \ie, treating the old tag as a query. 
However, such a solution may result in many unnecessary bucket retrievals. 
In the instance above, a bucket with tag $[\min, \max]$ will propagate its pending queries to \emph{all} buckets in the newly generated component, which will download the entire data store if it happens to be the $k$-th component. 
Fortunately, this issue could be easily fixed by invoking $\mathsf{Partition}$ in the interval $([\max(q.l, \mathsf{t}_{old}.l),\allowbreak \min (\allowbreak q.r,\allowbreak  \mathsf{t}_{old}.r ) ] )$, where $q$ is the query pending for future replies. 
We name it pending query transformation and provide a formal description in~\cref{alg:DODynamizations}.

\begin{theorem}[DO$_{\text{update}}$-ODDS]~\label{thm:DOODDS}Let $\varepsilon=\bigO{1}$, $k=\bigO{1}$, $\db$ and $\db'$ be two neighboring data stores, $\ops$ and $\ops'$ be two query-consistent neighboring operational sequences.
Then \system satisfies $(\varepsilon\log k , \delta)$-DO$_{\text{update}}$-ODDS and achieves perfect correctness.
\end{theorem}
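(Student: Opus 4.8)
\para{Proof plan.}
The plan is to decompose the access pattern $\access_{\system}(\db,\ops)$ of a run on a fixed $\db$ into (i) the setup accesses, (ii) the query-processing accesses, (iii) the ``destruction'' and re-bucketization accesses of the update path, and (iv) the $\mathsf{KWayDOMerge}$ invocations, and to show that parts (i)--(iii) are --- up to the $F$/$E$ pseudorandomness --- identically distributed for $\ops$ and for any neighboring $\ops'$, while part (iv) is $(\varepsilon\log k,\delta)$-differentially oblivious with respect to a one-element change.

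First I would fix $\ops\sim\ops'$ differing at a single position $i$ which both use as an update of a common type, note that the two executions coincide through position $i$, and argue that parts (i)--(iii) leak nothing more afterwards. For part (ii): by the ROR-CRDA guarantee of the range-query layer (\cref{subsec:rangeQuerySupport}), the fixed-rate bucket retrievals, the real/fake bucket IDs sampled from the pool, and the writebacks are drawn from a distribution that depends only on public quantities (the rate, the batch size, $Z$, and the bucket count), which agree for $\ops$ and $\ops'$ since they contain the same number of operations of each type. For parts (i) and (iii): the setup accesses are literally identical (same $\db$); each re-bucketization is performed by $\mathsf{OSort}$ in \cref{alg:range2kv}, whose access pattern depends only on the input length; and fetching and writing back a component's buckets touches a count of buckets fixed by the $k$-binomial counter $D$ of \cref{alg:DODynamizations}, which evolves identically in both runs. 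Hence only part (iv) can differ in distribution, and the pending-query transformation of \cref{alg:DODynamizations} --- which I would also inspect --- merely re-issues $\mathsf{Partition}$ on sub-intervals and adds no content-dependent accesses.

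Next I would bound part (iv). A single $\mathsf{KWayDOMerge}$ on inputs differing in one element is the iterative composition of $\lceil\log k\rceil$ rounds of the $2$-way $(\varepsilon,\delta)$-DO merge of~\cite{ChanCMS19}, with the differing element entering exactly one $2$-way merge per round, so the composition theorem for differential obliviousness~\cite{ChanCMS19, Mingxun2022DOComposition} makes one $k$-way merge $(\varepsilon\log k,\delta)$-DO. The hard part is then to control how many $\mathsf{KWayDOMerge}$ calls over the whole sequence actually ``see'' the differing record: every call whose inputs agree between the two runs contributes nothing, and since $k=\bigO{1}$ and the $k$-binomial transform only ever merges a living component upward, the record flows through $\bigO{k}=\bigO{1}$ merges with a non-identically-distributed pattern; a final DO composition absorbing this $\bigO{1}$ factor, together with $\varepsilon=\bigO{1}$ and folding the negligible $F$/$E$ advantage into $\delta$, yields $(\varepsilon\log k,\delta)$-DO$_{\text{update}}$-ODDS. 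I expect this counting argument --- together with verifying that the noisy bin counts inside each DO merge are the only content-dependent quantity --- to be the main obstacle.

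Finally, for perfect correctness I would verify each gadget independently: the $k$-binomial transform keeps exactly the live multiset of records; $\mathsf{KWayDOMerge}$ and $\mathsf{OSort}$ always output a correctly sorted array (the oblivious buffer and the truncated-Laplace padding move accesses around but never drop or duplicate a real element); $\mathsf{Bucketize}$ pads with $\infty$-dummies that $\mathsf{Reply}$ filters out; $\mathsf{Partition}$ and $\mathsf{Reply}$, executed once per living component and combined with the pending-query transformation, collect every bucket overlapping $[l,r]$ and discard false positives; the fixed-rate sampling pool of \cref{alg:samplingPool} eventually returns every pending bucket, so each query is answered with $q(\db)$ exactly; and linearizability of point-query writes is inherited from the read-then-write discipline of \cref{subsec:queryDecorrelation}. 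Since no leakage-mitigation component changes which records are returned, correctness holds with probability $1$.
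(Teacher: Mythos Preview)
Your plan follows the paper's core strategy --- reduce everything to the $(\varepsilon,\delta)$-DO $2$-way merge of~\cite{ChanCMS19} and apply $\log k$-fold composition --- but is far more elaborate than what the paper actually writes. The paper's entire proof is two sentences: correctness is deferred to~\cite{ChanCMS19}, and the DO bound follows ``by observing that every element is involved in only $\log k$ instances of the $(\varepsilon,\delta)$-DO merge algorithm'' together with the $\log k$-fold composition rule of differential privacy. Your decomposition into (i)--(iv), the explicit argument that setup, query-processing, and re-bucketization accesses are content-independent (via ROR-CRDA, $\mathsf{OSort}$ depending only on input length, and the deterministic $k$-binomial counter $D$), and the per-gadget correctness walk-through are all absent from the paper but are exactly what a rigorous argument would need.

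One caveat on your ``main obstacle'': the paper does not engage with it at all --- it implicitly treats $\varepsilon\log k$ as a per-insertion budget (cf.\ the remark immediately following the theorem about ``per-insertion privacy loss'') and never counts how many $\mathsf{KWayDOMerge}$ calls over the whole of $\ops$ touch the differing record. Your claim that the record flows through only $\bigO{k}$ such merges because ``the $k$-binomial transform only ever merges a living component upward'' undercounts: the record's level is indeed non-decreasing, but once it sits in the top component it is re-merged every time the cascade in \cref{alg:DODynamizations} reaches level $k$, which over long $\ops$ is not $\bigO{1}$. So that step would not go through as written; but since the paper's proof sidesteps the issue entirely rather than resolving it, your proposal is at least as complete as the paper's and more honest about where the real difficulty lies.
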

We give a formal proof in \myappend\revise{, wherein the argument intuitively follows a $k$-fold sequential composition of DO~\cite{Mingxun2022DOComposition}. 
We remark that \system sustains a consistent level of \emph{per-insertion} privacy loss, irrespective of the cumulative number of insertions into the encrypted data store. 
This contrasts to the log-structured merge tree employed in~\citep{ChanCMS19}, which offers $(\varepsilon\log |\ops|,\delta)$-DO and experiences a \emph{per-insertion} privacy loss that escalates with the cumulative number of inertions performed. 
Furthermore, we can apply a standard sequential composition rule for DO~\cite{Mingxun2022DOComposition} to compute the cumulative privacy loss of all updates to the encrypted data store. 
}\label{revise:R5}

\para{Update and delete supports.}
Updates are performed by inserting new record values with more recent operating timestamps. 
Deletion is indeed a special update with a special record type known as a tombstone. 
Therefore, the client proxy needs to sort the query results according to the search keys and cancel outdated records according to the operating timestamps. 
To further reduce performance overhead, we could mark all outdated records as dummies that will be purged while merging these components.

\revise{
\subsection{Security Assurance and Efficiency Tuning}\label{subsec:discussion}
Following the description of \system design, we proceed to review its security assurance and discuss how to tune efficiency with security. 
As \system evolves to support key-value searches, static range queries, and dynamic range queries, we will also discuss its security considerations in each of these scenarios.  
We emphasize that, throughout our design, we do not consider publishing or sharing private query results, but rather focus on preventing the untrusted server from recovering private client queries via various leakage patterns.

\para{$\theta$-query decorrelation in frequency-smoothed KV stores.}\system aims to mitigate query correlation leakage in frequency-smoothed key-value stores through random sampling. 
It mitigates the \emph{overall} query correlation by eliminating the correlation among \emph{pending} queries. 
This, however, comes at the cost of introducing unstable or increased query latency for users, particularly evident when handling numerous pending queries or enforcing a minimum size on the sampling pool.
The performance overhead is trivial, as the expected times for a query to be sampled uniformly at random from $\theta$ pending queries is $\theta$.
For instance, \system can set $\theta=n$ to align the size of the sampling pool with that of the data store to maximize security. 
Consequently, the trusted client proxy will \emph{randomly} retrieve an object from the untrusted server over the entire data store in each access. 
This ensures that server observations of data accesses remain entirely \emph{unrelated} to the client's real queries.
However, it results in severely restricted utility, since retrieving the target query requires, on average, $n$ random requests to the server.
We also empirically show that $\theta\in[2, 4]$ is adequate to mitigate query correlation leakage in a classic Markov query process. 

\para{Nearly zero-leakage static range queries.}\system then advances to support \emph{static} range queries, offering a strong security guarantee that ensures requests involving range queries to the untrusted server remain indistinguishable from point queries.
Namely, given a sequence of requests, adversaries cannot discern if certain queries constitute a range query.
And it relies on an additional yet reasonable assumption of the presence of a stable communication channel, allowing the trusted proxy to consistently query the untrusted server to conceal sensitive information.
A crucial parameter in this scenario, the bucket size $Z$, affects only the query efficiency.
And its optimization, as discussed in~\cref{subsec:rangeQuerySupport}, involves a delicate balance between partition time and filtering time, depending on the specific characteristics of the workload.

\para{Differentially oblivious updates.}Subsequently, \system expands its functionalities to handle \emph{updates} to the underlying data store.
We remark that to maintain an efficient search index, \system has to arrange the updated data within the index, rather than simply appending it to the end.
However, this "rearrangement" process may inadvertently reveal sensitive access pattern leakage.
Therefore, \system introduces differential obliviousness~\cite{ChanCMS19} (DO) to protect updated values from such leakage. 
Unlike DP, a notion protecting the \emph{values} of individuals, DO employs analogous principles but protects the \emph{access pattern} introduced by individuals, particularly through updates in our scenario. 
Furthermore, we emphasize that the access pattern reveals sensitive but incomplete information about updated data, and adversaries need additional auxiliary information to compromise client privacy~\cite{Kamara2021CryptanalysisOE,Islam2012AccessPD, Oya2021IHOPIS}.
Therefore, one may not need a considerably small $\varepsilon$ to mitigate such leakage. 
In cases where clients do require such a small $\varepsilon$, the efficiency of \system will degrade to an oblivious one, ensuring that no sensitive information about updated data will be leaked. 
Conversely, when aiming for maximal efficiency with a large $\varepsilon$, \system will operate similarly to a nonoblivious one,  disclosing the locations of updated data and potentially revealing their sensitive values.
}

\section{Experimental Evaluation}\label{sec:evaluation}
We implemented \system as a modular client-server application on Intel SGX~\cite{Costan2016IntelSE} in C++.
Our implementation builds upon the Remote Attestation sample code provided with the SGX SDK~\cite{sgxGithub} and utilizes its libraries for encryption, MACs, and hashing.

\subsection{Experimental Setup}\label{subsec:expsetup}
\para{Environment.}We run experiments on a machine with an Intel(R) Xeon(R) Platinum 8369B CPU @ 2.90GHz \revise{of 32 physical cores, with SGXv2 enabled}.
The machine has \revise{128GB RAM, of which about 64GB is enclaves' protected memory.} It operates on Ubuntu 20.04 and uses SGX SDK version 2.19.

\para{Dataset and query.}We use synthetic datasets with varying dataset sizes, record lengths, and domain sizes.
We also generate several query sets with different selectivities (\ie, lengths of the range).
Queries are sampled uniformly from the relevant domain.

\noindent\begin{figure*}
    \centering
    \resizebox{0.8\textwidth}{!}{
    \begin{tikzpicture}

\definecolor{darkgray176}{RGB}{176,176,176}

\begin{axis}[
point meta max=0.0322222222222222,
point meta min=0.0222222222222222,
tick align=outside,
tick pos=left,
title={\pancake with\\correlated queries},
title style={yshift=-6pt, font=\small, align=center}, 
ylabel style={yshift=-2pt},
xlabel style={yshift=6pt},
xticklabel style={rotate=60, anchor=east}, 
at={(0.0\textwidth,0)},
width=0.24\linewidth,
height=0.24\linewidth,
x grid style={darkgray176},
xlabel={To \(\displaystyle q_{i+1}\)},
xmin=-0.5, xmax=5.5,
xtick style={color=black},
xtick={0,1,2,3,4,5},
xticklabels={
  {\(\displaystyle k_{1,0}\)},
  {\(\displaystyle k_{1,1}\)},
  {\(\displaystyle k_{2,0}\)},
  {\(\displaystyle k_{2,1}\)},
  {\(\displaystyle k_{3,0}\)},
  {\(\displaystyle k_{f,0}\)}
},
y dir=reverse,
y grid style={darkgray176},
ylabel={From \(\displaystyle q_i\)},
ymin=-0.5, ymax=5.5,
ytick style={color=black},
ytick={0,1,2,3,4,5},
yticklabels={
  {\(\displaystyle k_{1,0}\)},
  {\(\displaystyle k_{1,1}\)},
  {\(\displaystyle k_{2,0}\)},
  {\(\displaystyle k_{2,1}\)},
  {\(\displaystyle k_{3,0}\)},
  {\(\displaystyle k_{f,0}\)}
}
]
\addplot graphics [includegraphics cmd=\pgfimage,xmin=-0.5, xmax=5.5, ymin=5.5, ymax=-0.5] {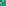};
\end{axis}

\begin{axis}[
point meta max=0.0322222222222222,
point meta min=0.0222222222222222,
tick align=outside,
tick pos=left,
title={\system with\\0-query decorrelation},
title style={yshift=-6pt, font=\small, align=center}, 
ylabel style={yshift=-2pt},
xlabel style={yshift=6pt},
xticklabel style={rotate=60, anchor=east}, 
at={(0.16\linewidth,0)},
width=0.24\linewidth,
height=0.24\linewidth,
x grid style={darkgray176},
xlabel={To \(\displaystyle q_{i+1}\)},
xmin=-0.5, xmax=5.5,
xtick style={color=black},
xtick={0,1,2,3,4,5},
xticklabels={
  {\(\displaystyle k_{1,0}\)},
  {\(\displaystyle k_{1,1}\)},
  {\(\displaystyle k_{2,0}\)},
  {\(\displaystyle k_{2,1}\)},
  {\(\displaystyle k_{3,0}\)},
  {\(\displaystyle k_{f,0}\)}
},
y dir=reverse,
y grid style={darkgray176},
ymin=-0.5, ymax=5.5,
ytick=\empty,
]
\addplot graphics [includegraphics cmd=\pgfimage,xmin=-0.5, xmax=5.5, ymin=5.5, ymax=-0.5] {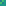};
\end{axis}

\begin{axis}[
point meta max=0.0322222222222222,
point meta min=0.0222222222222222,
tick align=outside,
tick pos=left,
title={\system with\\2-query decorrelation},
title style={yshift=-6pt, font=\small, align=center}, 
ylabel style={yshift=-2pt},
xlabel style={yshift=6pt},
xticklabel style={rotate=60, anchor=east}, 
at={(0.32\textwidth,0)},
width=0.24\linewidth,
height=0.24\linewidth,
x grid style={darkgray176},
xlabel={To \(\displaystyle q_{i+1}\)},
xmin=-0.5, xmax=5.5,
xtick style={color=black},
xtick={0,1,2,3,4,5},
xticklabels={
  {\(\displaystyle k_{1,0}\)},
  {\(\displaystyle k_{1,1}\)},
  {\(\displaystyle k_{2,0}\)},
  {\(\displaystyle k_{2,1}\)},
  {\(\displaystyle k_{3,0}\)},
  {\(\displaystyle k_{f,0}\)}
},
y dir=reverse,
y grid style={darkgray176},
ymin=-0.5, ymax=5.5,
ytick=\empty,
]
\addplot graphics [includegraphics cmd=\pgfimage,xmin=-0.5, xmax=5.5, ymin=5.5, ymax=-0.5] {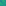};
\end{axis}

\begin{axis}[
point meta max=0.0322222222222222,
point meta min=0.0222222222222222,
tick align=outside,
tick pos=left,
title={\system with\\4-query decorrelation},
title style={yshift=-6pt, font=\small, align=center}, 
ylabel style={yshift=-2pt},
xlabel style={yshift=6pt},
xticklabel style={rotate=60, anchor=east}, 
at={(0.48\textwidth,0)},
width=0.24\linewidth,
height=0.24\linewidth,
x grid style={darkgray176},
xlabel={To \(\displaystyle q_{i+1}\)},
xmin=-0.5, xmax=5.5,
xtick style={color=black},
xtick={0,1,2,3,4,5},
xticklabels={
  {\(\displaystyle k_{1,0}\)},
  {\(\displaystyle k_{1,1}\)},
  {\(\displaystyle k_{2,0}\)},
  {\(\displaystyle k_{2,1}\)},
  {\(\displaystyle k_{3,0}\)},
  {\(\displaystyle k_{f,0}\)}
},
y dir=reverse,
y grid style={darkgray176},
ymin=-0.5, ymax=5.5,
ytick=\empty,
]
\addplot graphics [includegraphics cmd=\pgfimage,xmin=-0.5, xmax=5.5, ymin=5.5, ymax=-0.5] {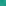};
\end{axis}

\begin{axis}[
colorbar,
colorbar style={ylabel={}},
colormap/viridis,
point meta max=0.0322222222222222,
point meta min=0.0222222222222222,
tick align=outside,
tick pos=left,
title={\pancake with\\independent queries},
title style={yshift=-6pt, font=\small, align=center}, 
ylabel style={yshift=-2pt},
xlabel style={yshift=6pt},
xticklabel style={rotate=60, anchor=east}, 
at={(0.64\textwidth,0)},
width=0.24\linewidth,
height=0.24\linewidth,
x grid style={darkgray176},
xlabel={To \(\displaystyle q_{i+1}\)},
xmin=-0.5, xmax=5.5,
xtick style={color=black},
xtick={0,1,2,3,4,5},
xticklabels={
  {\(\displaystyle k_{1,0}\)},
  {\(\displaystyle k_{1,1}\)},
  {\(\displaystyle k_{2,0}\)},
  {\(\displaystyle k_{2,1}\)},
  {\(\displaystyle k_{3,0}\)},
  {\(\displaystyle k_{f,0}\)}
},
y dir=reverse,
y grid style={darkgray176},
ymin=-0.5, ymax=5.5,
ytick=\empty, 
]
\addplot graphics [includegraphics cmd=\pgfimage,xmin=-0.5, xmax=5.5, ymin=5.5, ymax=-0.5] {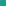};
\end{axis}
\end{tikzpicture}
    }
    \caption{
    Observed query transition frequencies following the Markov process in~\cref{fig:markovModel}.
    \revise{%
    Reduced color variation signifies a decrease in observed query correlation.
    The rightmost figure represents a scenario in which queries are generated independently.%
    }
    }
    \label{fig:decorrelation}
\end{figure*}

\para{Default parameters.}Unless otherwise specified, a number of parameters are set to default values.
The default privacy parameters are $\varepsilon=1.0$, $\lambda=512$ (\ie, $\delta=e^{-\log^2{\lambda}}\approx 10^{-35}$), $\theta=5$ and storage overhead $\alpha=2$ for frequency smoothing.
The default sampling strategy is to sample with equal and constant weights.
We choose a default bucket capacity $Z=512$.
The default data store consists of $n=10^6$ uniformly sampled records of size $4$ KB from the domain $[1,10^6]$.
Queries are uniformly sampled with selectivity $\sigma=0.5\%$, and the optimal batch size for frequency smoothing is estimated as $\lceil 3\cdot n\sigma/Z\rceil$ according to \citet{Grubbs2020PancakeFS}.
We use Redis~\citep{redis} as the default backend key-value store, and \system is instantiated with $k=8$ and parallelized with 32 threads (for $k$-way DO merge and bitonic sorting). And the effective bandwidth is $100$ Mbps.

\begin{figure}[hbt]
    \centering
    \begin{tikzpicture}[
            > = stealth',
            auto,
            prob/.style = {inner sep=1pt,font=\footnotesize},
            transform shape,
            scale=0.8
        ]
        \node[state] (a) {$k_2$};
        \node[state] (b) [above right=of a] {$k_1$};
        \node[state] (c) [below right=of b] {$k_3$};
        \path[->] (a) edge [bend left=10] node[prob]{$0.90$} (b)
        edge [bend left=10] node[prob]{$0.10$} (c)
        (b) edge[loop above] node[prob]{$0.3$} (b)
        edge [bend left=10] node[prob]{$0.65$} (a)
        edge [bend left=10] node[prob]{$0.05$} (c)
        (c) edge [bend left=10] node[prob]{$0.7$} (b)
        edge [bend left=10] node[prob]{$0.3$} (a);
    \end{tikzpicture}
    \begin{tikzpicture}
\definecolor{cornflowerblue107174214}{RGB}{107,174,214}
\definecolor{darkgray176}{RGB}{176,176,176}
\definecolor{lightsteelblue158202225}{RGB}{158,202,225}
\definecolor{steelblue49130189}{RGB}{49,130,189}

\begin{axis}[
tick align=inside,
tick pos=left,
xmin=-0.54, xmax=2.54,
xtick style={color=black, opacity=0},
xtick={0,1,2},
xticklabels={
  \(\displaystyle \mathit{k}_1\),
  \(\displaystyle \mathit{k}_2\),
  \(\displaystyle \mathit{k}_3\)
},
ymin=0, ymax=0.581999607967555,
ytick style={color=black},
xticklabel style = {font=\normalsize},
yticklabel style = {font=\footnotesize},
axis y line=left,
axis x line=bottom,
axis line style={-},
width=0.45\linewidth
]
\draw[draw=none,fill=steelblue49130189] (axis cs:-0.4,0) rectangle (axis cs:0.4,0.554285340921481);
\draw[draw=none,fill=cornflowerblue107174214] (axis cs:0.6,0) rectangle (axis cs:1.4,0.380000409315924);
\draw[draw=none,fill=lightsteelblue158202225] (axis cs:1.6,0) rectangle (axis cs:2.4,0.065714249762595);
\end{axis}

\end{tikzpicture}
    \caption{Markov model (left) and its stationary distribution of queried keywords (right).}
    \label{fig:markovModel}
\end{figure}
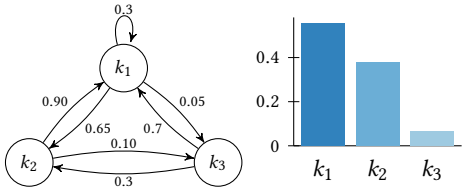

\para{Compared approaches.}\revise{We set up two sets of experiments for comparison with the baseline approaches: one in the context of key-value stores, aligning with \pancake's specialization (corresponding to \system in~\cref{subsec:queryDecorrelation}), and the other in a dynamic range query scenario.

In the first setting, we focus on the efficacy of query decorrelation and the performance overhead introduced by \system relative to \pancake. 
For a clearer demonstration of query decorrelation efficacy, we adopt a simple and classic assumption that client queries follow a Markov process, \ie, the current query depends only on the previous one. 
An adversary can then recover encrypted queries by analyzing the transition frequencies among those queries and the (esitmated) Markov transition probabilities~\cite{Oya2021IHOPIS}.

In light of the above, achieving uniform or smoothed transition frequencies helps in thwarting such attacks, as \emph{uniform} transition frequencies indicate no query correlation.
We will hence measure the efficacy of query decorrelation via two approaches: 1) heat maps, where different colors denote different transition frequencies, and reduced color variation hence signifies a decrease in observed query correlation; and 2) relative standard deviation (RSD), quantifying the degree of dispersion in observed frequencies relative to the uniform, where a smaller RSD implies less observed query correlation.

As outlined in~\cref{subsec:discussion}, the primary efficiency overhead introduced by \system over \pancake is its variable or increased latency, measured in batches required to retrieve the target object.
Additionally, as \pancake duplicates $n$ key-value pairs into $2n$ replicas, we limit $n$ to $3$ for a clearer presentation in the heat maps. 
The chosen Markov model and the stationary distribution of keys are shown in~\cref{fig:markovModel}.
}

We also benchmark the full version of \system against the following approaches with security ranked from weak to strong.

    \noindent\underline{Encryption-only encrypted databases.} Current commercial products (\eg, Azure Always-encrypted~\cite{Antonopoulos2020AzureSD}, Operon~\cite{Wang2022OperonAE}) ignoring leakage mitigation represent the most efficient but least private and secure solution. We take StealthDB~\cite{Gribov2019StealthDBAS} as an example.
    
    \noindent\underline{ObliDB.} It represents oblivious solutions that fully eliminate access pattern but ignore the volume pattern~\cite{Eskandarian2019ObliDBOQ}.
          We estimate the query latency via the StealthDB without index as ObliDB linearly scans the encrypted data store at least twice to answer range queries.
          
    \noindent\underline{\epsolute.} \epsolute combines differential privacy and ORAM to mitigate the volume pattern and fully conceal the memory access pattern, respectively.
          Upon receiving a range query, the user (or a trusted client proxy \client similar to our setting) identifies the true IDs of records targeted by the query via the local indices, and the server \server uses a DP sanitizer to compute the noisy number of records $c$ for \client.
          \client then prepares fake IDs based on $c$ and sends both true and fake IDs to \server to retrieve corresponding records through an ORAM protocol.
          The above solution could still be prohibitively slow in practice.
          Fortunately, the authors bring it to real-world requirements by initiating parallel local indices and ORAMs.
          Similar design ideas are also present in \textsc{ShortStack}~\citep{Vuppalapati2022SHORTSTACKDF}, a distributed and optimized \pancake design.
          Note that we can also draw upon such concepts to scale \system horizontally.
          Therefore, we limit \epsolute with a single ORAM for fairness.

    \noindent\revise{\underline{\system.} We place \system here to mark its more comprehensive security guarantees in comparison to the above three baselines.
    }
    
    \noindent\underline{Linear scan.} It serves as the most private and secure albeit inefficient baseline that \client will download the entire encrypted data store from \server, decrypt, and linearly scan to answer every query.
      It eliminates all five detriment leakage patterns on which we focus.
      We further enable parallel download for a fair comparison with \system.
\subsection{Experiment Stages}
\label{subsec:exps}
\revise{\para{Query decorrelation and performance compared to \pancake.}
}%
As shown in~\cref{fig:decorrelation}, the direct application of our method, even without the minimum requirement on the size of the sampling pool (\ie, $\theta=0$), will significantly smooth the transition frequencies \revise{compared to those of the \pancake design with correlated queries.
When $\theta=4$, the transition frequencies of our proposed approach are close to the uniform ones, \ie, independent queries with no correlation.}
The results imply that \system does effectively decorrelate client queries, thereby thwarting frequency analysis attacks~\cite{Oya2021IHOPIS,Kamara2023Maple}.

To show the tunable trade-offs between privacy and efficiency of \system, we evaluate \system across different sampling pool sizes $\theta$ \revise{and weight update policies.
Specifically, we employ the relative standard deviation (RSD) metric to quantify privacy benefits and measure the efficiency of \system in terms of batches required to retrieve the target data as discussed in~\cref{subsec:expsetup}.}
\Cref{fig:RSDvsLatency} shows the RSD and latency of \system and \pancake with different settings.

In terms of privacy, we find that when \system updates the sampling weight linearly or constantly, the relative standard deviation decreases rapidly and becomes comparable to that of the independent case at $\theta=4$, showing better privacy benefits.
On the other hand, when the sampling weights are updated exponentially, the RSD stabilizes after dropping to 4\%.
This is because, in this case, the sampling pool behaves akin to a FIFO queue, where correlation among pending queries is more likely to be preserved.
Nevertheless, \system still outperforms \pancake, whose RSD is greater than $6\%$.

It is also evident that the aforementioned privacy benefits come at the expense of unstable or increased latency.
The average latency increases from $1.7\times$ to $4\times$ as $\theta$ increases from $1$ to $4$\revise{, consistent with our analysis in~\cref{subsec:discussion}}.
We also notice that the faster the sampling weights grow over time, the more closely the distribution of latency behaves to a FIFO queue, which also confirms our viewpoint.

\begin{figure}[b]
    \resizebox{\linewidth}{!}{%
        \input{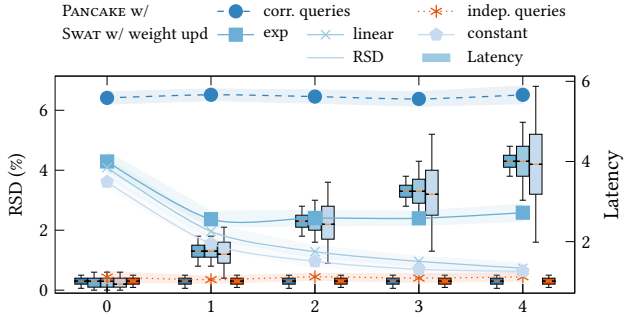}
    }
    \caption{
        Privacy benefits and efficiency penalties over varying sampling pool sizes $\theta$ \revise{and weight update policies against \pancake}.
        RSD \revise{lines} denotes relative standard deviation that measures the degree of dispersion of frequencies relative to the uniform.
        Latency \revise{bars}, measured by the number of batches to retrieve the target data, serves as a key performance indicator to reflect system responsiveness.
        \revise{
        The three middle bars from left to right represent exponential, linear, and constant updates to sampling weights in each batch.
        }
        }\label{fig:RSDvsLatency}
\end{figure}

\noindent\revise{
We will then move on to the evaluation of the full version of \system. 
}

\para{Storage and bandwidth costs.}
The storage usage of two baseline approaches and \system is presented in~\cref{tab:storage}.
We can observe that \client only needs to maintain a \emph{tiny} local state that is $161\times$ smaller than \epsolute.
Meanwhile, the server storage is approximately twice that of the encryption-only one, and is more competitive than \epsolute.
\begin{table}[!t]
    \centering
\resizebox{0.85\linewidth}{!}{
    \begin{minipage}[l]{0.45\linewidth}
        \centering
        \begin{tabular}{l *{2}{c}}
            \toprule
                      & \client & \server \\
                      & MB      & GB      \\
            \midrule
            StealthDB & 0       & $3.81$  \\
            \epsolute & $28.9$  & $12.0$  \\
            \system   & $0.18$  & $7.65$  \\
            \bottomrule
        \end{tabular}
        \caption{Storage usage in default setting.}
        \label{tab:storage}
    \end{minipage}
    \hspace{5pt}
    \begin{minipage}[r]{0.45\linewidth}
        \centering
        \begin{tabular}{l *{3}{c}}
            \toprule
            \diagbox{$\sigma$}{$n$} & $10^5$ & $10^6$ & $10^7$ \\
            \midrule
            0.1\%                   & 6      & 12     & 118    \\
            0.2\%                   & 6      & 24     & 236    \\
            0.5\%                   & 8      & 58     & 586    \\
            \bottomrule
        \end{tabular}
        \caption{Bandwidth costs (MB) with various $\sigma$ and $n$.}
        \label{tab:bandwidth}
    \end{minipage}
    }
\end{table}

Query bandwidth costs in \system are also reasonable as shown in~\cref{tab:bandwidth}.
In specific, a dedicated 1 Gbps network channel can support querying less than 0.1\% of the data from a database with 10 million 4 KB records (equivalent to 40 GB) per second.

\para{Performance against baseline approaches.}
We run experiments with default parameters on \system and the aforementioned baselines (\cref{fig:performanceComparsion}).
StealthDB, configured for (approximately) maximum performance and no leakage mitigation, completes queries within one second, which is just $10.6\times$ faster than \system.
It also implies the necessary time costs associated with encryption/decryption, enclave operations, and network transmission.
Linear scan illustrates the efficiency and practicality of \system compared to the trivially zero-leakage solution.
The difference is $31.6\times$ for the default setting.
In addition, the query latency is competitive against the other two baselines that hide only \emph{subsets of} the leakage patterns we identified.
Also, \epsolute lacks support for dynamic workloads.

\begin{figure}[b]
    \resizebox{0.9\linewidth}{!}{
    \begin{tikzpicture}

\definecolor{cornflowerblue107174214}{RGB}{107,174,214}
\definecolor{darkgray176}{RGB}{176,176,176}
\definecolor{lightsteelblue158202225}{RGB}{158,202,225}
\definecolor{orangered2308513}{RGB}{230,85,13}
\definecolor{powderblue198219239}{RGB}{198,219,239}
\definecolor{steelblue49130189}{RGB}{49,130,189}

\begin{axis}[
log basis y={10},
tick pos=left,
xmin=-0.64, xmax=4.64,
xtick style={draw=none},
xtick={0,1,2,3,4},
xticklabels={StealthDB,ObliDB,\(\displaystyle \mathcal{E}\)psolute,\system, \ \ Linear scan},
yticklabel style={font=\footnotesize},
xticklabel style={font=\small},
y grid style={darkgray176},
ylabel={Query latency},
ymin=0.695477437801663, ymax=415.60514301318,
ymode=log,
axis y line=left,
axis x line=bottom,
ytick style={color=black},
ytick={0.01,0.1,1,10,100,1000,10000},
axis line style={-},
yticklabels={
  \(\displaystyle {10^{-2}}\),
  \(\displaystyle {10^{-1}}\),
  \(\displaystyle {10^{0}}\),
  \(\displaystyle {10^{1}}\),
  \(\displaystyle {10^{2}}\),
  \(\displaystyle {10^{3}}\),
  \(\displaystyle {10^{4}}\)
},
tick align=inside,
tick style={
    major tick length=3pt,
},
width=\linewidth,
height=0.4\linewidth,
]
\draw[draw=none,fill=steelblue49130189,fill opacity=0.5] (axis cs:-0.4,0.7) rectangle (axis cs:0.4,0.93);
\draw[draw=none,fill=cornflowerblue107174214,fill opacity=0.5] (axis cs:0.6,0.6954774) rectangle (axis cs:1.4,10.7);
\draw[draw=none,fill=lightsteelblue158202225,fill opacity=0.5] (axis cs:1.6,0.6954774) rectangle (axis cs:2.4,17.9);
\draw[draw=none,fill=powderblue198219239,fill opacity=0.5] (axis cs:3.6,0.6954774) rectangle (axis cs:4.4,310.8);
\draw[draw=none,fill=orangered2308513,fill opacity=0.5] (axis cs:2.6,0.6954774) rectangle (axis cs:3.4,9.87);
\draw (axis cs:0,0.93) node[
  scale=0.6,
  anchor=south,
  text=black,
  rotate=0.0,
  font=\Large
]{0.93 s};
\draw (axis cs:1,10.7) node[
  scale=0.6,
  anchor=south,
  text=black,
  rotate=0.0,
  font=\Large
]{10.7 s};
\draw (axis cs:2,17.9) node[
  scale=0.6,
  anchor=south,
  text=black,
  rotate=0.0,
  font=\Large
]{17.9 s};
\draw (axis cs:3,9.87) node[
  scale=0.6,
  anchor=south,
  text=black,
  rotate=0.0,
  font=\Large
]{9.87 s};
\draw (axis cs:4,100.8) node[
  scale=0.6,
  anchor=south,
  text=black,
  rotate=0.0,
  font=\Large
]{5.2 min};
\end{axis}

\end{tikzpicture}
    }
    \caption{\emph{Range-query} systems under the default setting.
    The security strength increases from left to right.}\label{fig:performanceComparsion}
\end{figure}
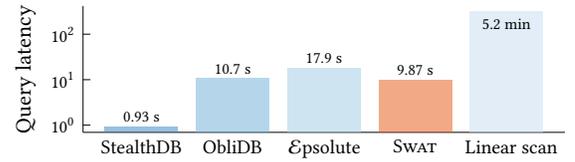

\para{Performance with varying parameters.}We have varied many configuration parameters to measure and understand the impact of them on the performance of \system, including the search time efficiency (in~\cref{fig:queryLatency}) and the update time efficiency (in~\cref{fig:updateTime}).

\noindent\emph{Bucket capacity (\cref{subfig:queryLatencyVSBucketCapacity}).} We observe a notable decrease in query latency from a small $Z$ to a moderate \revise{one}, while the latency gradually increases for larger $Z$.
\revise{A small bucket size performs worse than moderate sizes due to 1) more buckets the dataset being partitioned into and hence more time to locate target buckets, and 2) more batches to fetch the same amount of data.
Conversely, a large bucket size leads to more false positives being transmitted and filtered by the client proxy, thereby resulting in higher latency.
For instance, increasing the bucket size from $2^8$ to $2^9$ reduces the (expected) number of batches (\ie, roundtrips) required to fulfill a range query from $4$ to $2$, thus effectively reducing query latency. 
While further increasing the bucket size to $2^{11}$ increases false positive rates from $24.8\%$ to $39.0\%$, hence increasing filtering time.
}

\noindent\emph{Data, record and result size (\cref{subfig:searchLatencyVSNum,subfig:searchLatencyVSRecordLen,subfig:searchLatencyVSSel}).} We observe query overheads against all three parameters are positive but non-linear.
This is explained by the discrete batch sizes and bucket sampling for query decorrelation.

\begin{figure}[tb]
    \captionsetup[subfigure]{justification=centering}
    \centering
    
    \resizebox{1.0\linewidth}{!}{
    \begin{subfigure}[b]{\linewidth}
        \centering
        \begin{tikzpicture}

\definecolor{darkgray176}{RGB}{176,176,176}
\definecolor{steelblue49130189}{RGB}{49,130,189}

\begin{axis}[
log basis x={2},
tick pos=left,
axis y line=left,
axis x line=bottom,
axis line style={-},
x grid style={darkgray176},
xmin=228.160240162239, xmax=2872.36724301302,
xmode=log,
xtick style={color=black, opacity=0},
ticklabel style={font=\footnotesize},
xtick={64,128,256,512,1024,2048,4096,8192},
xticklabels={
  \(\displaystyle {2^{6}}\),
  \(\displaystyle {2^{7}}\),
  \(\displaystyle {2^{8}}\),
  \(\displaystyle {2^{9}}\),
  \(\displaystyle {2^{10}}\),
  \(\displaystyle {2^{11}}\),
  \(\displaystyle {2^{12}}\),
  \(\displaystyle {2^{13}}\)
},
tick style={
    major tick length=2pt, %
    minor tick length=1pt, %
},
y grid style={darkgray176},
ylabel style={align=center}, ylabel=Latency (s), 
tick align=inside,
ymin=8, ymax=24,
ytick style={color=black},
width=0.95\linewidth,
height=0.3\linewidth
]
\addplot+ [smooth, thick, steelblue49130189, opacity=0.5, mark=*, mark size=2, mark options={solid}]
table {%
256 22.9541048888889
512 10.4753808888889
768 10.15217
1024 10.5864732222222
1280 11.7563995555556
1536 11.9839282222222
1792 12.8617732
2048 13.6361406666667
2304 15.8761813333333
2560 19.34211222
};
\end{axis}

\end{tikzpicture}
        \caption{Bucket capacity}\label{subfig:queryLatencyVSBucketCapacity}
    \end{subfigure}
    }
    \resizebox{0.9\linewidth}{!}{
    \begin{subfigure}[b]{0.32\linewidth}
        \begin{tikzpicture}

\definecolor{cornflowerblue107174214}{RGB}{107,174,214}
\definecolor{darkgray176}{RGB}{176,176,176}
\definecolor{lightsteelblue158202225}{RGB}{158,202,225}
\definecolor{steelblue49130189}{RGB}{49,130,189}

\begin{axis}[
log basis y={10},
tick align=inside,
tick pos=left,
x grid style={darkgray176},
xmin=0.4, xmax=4.7,
xtick style={color=black, opacity=0},
xtick={1.2,2.6,4},
xticklabels={\(\displaystyle 10^4\),\(\displaystyle 10^5\),\(\displaystyle 10^6\)},
y grid style={darkgray176},
ylabel style={align=center},
ylabel=Latency (s),
ymin=0.884529730355192, ymax=11.7838401212584,
ymode=log,
ytick style={color=black, opacity=0},
ytick={0.01,0.1,1,10,100,1000},
yticklabels={
  \(\displaystyle {10^{-2}}\),
  \(\displaystyle {10^{-1}}\),
  \(\displaystyle {10^{0}}\),
  \(\displaystyle {10^{1}}\),
  \(\displaystyle {10^{2}}\),
  \(\displaystyle {10^{3}}\)
},
ticklabel style={font=\footnotesize},
yticklabel style={xshift=4pt}, 
xticklabel style={yshift=2pt},
axis y line=left,
axis x line=bottom,
axis line style={-},
width=1.15\linewidth,
height=0.95\linewidth
]
\draw[draw=steelblue49130189,fill=steelblue49130189,opacity=0.5] (axis cs:0.6,0.001) rectangle (axis cs:1.8,0.9950146);
\draw[draw=cornflowerblue107174214,fill=cornflowerblue107174214,opacity=0.5] (axis cs:2,0.001) rectangle (axis cs:3.2,2.60141577777778);
\draw[draw=lightsteelblue158202225,fill=lightsteelblue158202225,opacity=0.5] (axis cs:3.4,0.001) rectangle (axis cs:4.6,10.4753808888889);
\end{axis}

\end{tikzpicture}
        \caption{Dataset size}\label{subfig:searchLatencyVSNum}
    \end{subfigure}
    \begin{subfigure}[b]{0.32\linewidth}
        \begin{tikzpicture}

\definecolor{cornflowerblue107174214}{RGB}{107,174,214}
\definecolor{darkgray176}{RGB}{176,176,176}
\definecolor{lightsteelblue158202225}{RGB}{158,202,225}
\definecolor{steelblue49130189}{RGB}{49,130,189}

\begin{axis}[
log basis y={10},
tick align=inside,
tick pos=left,
x grid style={darkgray176},
xmin=0.4, xmax=10.5,
xtick style={color=black},
xtick={2.1,5.5,8.9},
xticklabels={8B,128B,4KB},
y grid style={darkgray176},
ymin=0.274190116970388, ymax=12.4597341692746,
ymode=log,
ytick style={color=black},
ytick={0.01,0.1,1,10,100,1000},
yticklabels={
  \(\displaystyle {10^{-2}}\),
  \(\displaystyle {10^{-1}}\),
  \(\displaystyle {10^{0}}\),
  \(\displaystyle {10^{1}}\),
  \(\displaystyle {10^{2}}\),
  \(\displaystyle {10^{3}}\)
},
tick style={opacity=0},
ticklabel style={font=\footnotesize},
xticklabel style={yshift=2pt},
axis y line=left,
axis x line=bottom,
axis line style={-},
yticklabel style={xshift=4pt}, 
width=1.15\linewidth,
height=0.95\linewidth
]
\draw[draw=steelblue49130189,fill=steelblue49130189,opacity=0.5] (axis cs:0.6,0.001) rectangle (axis cs:3.6,0.32613);
\draw[draw=cornflowerblue107174214,fill=cornflowerblue107174214,opacity=0.5] (axis cs:4,0.001) rectangle (axis cs:7,0.66379);
\draw[draw=lightsteelblue158202225,fill=lightsteelblue158202225,opacity=0.5] (axis cs:7.4,0.001) rectangle (axis cs:10.4,10.4753808888889);
\end{axis}

\end{tikzpicture}
        \caption{Record size}\label{subfig:searchLatencyVSRecordLen}
    \end{subfigure}
    \begin{subfigure}[b]{0.32\linewidth}
        \begin{tikzpicture}

\definecolor{cornflowerblue107174214}{RGB}{107,174,214}
\definecolor{darkgray176}{RGB}{176,176,176}
\definecolor{lightsteelblue158202225}{RGB}{158,202,225}
\definecolor{steelblue49130189}{RGB}{49,130,189}

\begin{axis}[
tick align=inside,
tick pos=left,
x grid style={darkgray176},
xmin=0.4, xmax=4.7,
xtick style={color=black},
xtick={1.2,2.6,4},
xticklabels={0.25,0.5,1},
y grid style={darkgray176},
ymin=8, ymax=11.1,
ytick style={color=black},
tick style={opacity=0},
ticklabel style={font=\footnotesize},
xticklabel style={yshift=2pt},
axis y line=left,
axis x line=bottom,
axis line style={-},
yticklabel style={xshift=4pt}, 
width=1.15\linewidth,
height=0.95\linewidth
]
\draw[draw=steelblue49130189,fill=steelblue49130189,opacity=0.5] (axis cs:0.6,0) rectangle (axis cs:1.8,8.29680392592593);
\draw[draw=cornflowerblue107174214,fill=cornflowerblue107174214,opacity=0.5] (axis cs:2,0) rectangle (axis cs:3.2,8.84144816666667);
\draw[draw=lightsteelblue158202225,fill=lightsteelblue158202225,opacity=0.5] (axis cs:3.4,0) rectangle (axis cs:4.6,11.0070785333333);
\end{axis}

\end{tikzpicture}
        \caption{Selectivity (\%)}\label{subfig:searchLatencyVSSel}
    \end{subfigure}
    }
    \caption{Query latency under different settings.}\label{fig:queryLatency}
\end{figure}
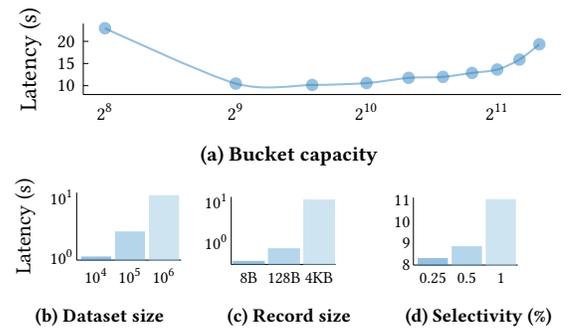

\begin{figure}[b]
    \centering
    
    \resizebox{0.9\linewidth}{!}{
    \begin{subfigure}[b]{0.58\linewidth}
        \centering
        \begin{tikzpicture}

\definecolor{darkgray176}{RGB}{176,176,176}
\definecolor{lightgray204}{RGB}{204,204,204}
\definecolor{lightsteelblue158202225}{RGB}{158,202,225}
\definecolor{steelblue49130189}{RGB}{49,130,189}

\begin{axis}[
legend cell align={left},
legend style={
  fill opacity=0,
  draw opacity=0,
  text opacity=1,
  at={(-0.01,1.12)},
  anchor=north west,
  draw=lightgray204,
  font=\footnotesize,
  legend image post style={scale=0.5}
},
log basis x={10},
log basis y={10},
tick pos=left,
x grid style={darkgray176},
xmin=630.957344480193, xmax=15848931.9246111,
xmode=log,
xtick style={color=black, opacity=0},
xtick={10,100,1000,10000,100000,1000000,10000000,100000000,1000000000},
xticklabels={
  \(\displaystyle {10^{1}}\),
  \(\displaystyle {10^{2}}\),
  \(\displaystyle {10^{3}}\),
  \(\displaystyle {10^{4}}\),
  \(\displaystyle {10^{5}}\),
  \(\displaystyle {10^{6}}\),
  \(\displaystyle {10^{7}}\),
  \(\displaystyle {10^{8}}\),
  \(\displaystyle {10^{9}}\)
},
y grid style={darkgray176},
ylabel={Time (min)},
ylabel style={yshift=-4pt},
ymin=0.001, ymax=1000,
ymode=log,
ytick style={color=black},
ytick={1e-02,1,100,900},
yticklabels={
  \(\displaystyle {10^{-2}}\),
  \(\displaystyle {10^{0}}\),
  \(\displaystyle {10^{2}}\),
},
yticklabel style={xshift=2pt},
tick style={
    major tick length=2pt, %
    minor tick length=1pt, %
},
axis y line=left,
axis x line=bottom,
axis line style={-},
ticklabel style={font=\footnotesize},
tick align=inside,
width=1\linewidth,
height=0.6\linewidth
]
\addplot+ [smooth, thick, steelblue49130189, opacity=0.8, mark=*, mark size=1.5, mark options={solid}]
table {%
1000 0.0029
10000 0.01825
100000 0.1660667
1000000 2.21338
10000000 26.5606
};
\addlegendentry{Bulk insert}
\addplot+ [smooth, thick, lightsteelblue158202225, opacity=0.8, mark=+, mark size=1.5, mark options={solid}]
table {%
1000 0.0027833
10000 0.06813333
100000 1.02701666
1000000 13.1834333
10000000 118.650900
};
\addlegendentry{Sequential insert}
\end{axis}

\end{tikzpicture}
        \caption{Dataset size}\label{subfig:setupTimeVSN}
    \end{subfigure}
    \begin{subfigure}[b]{0.4\linewidth}
        \centering
        \begin{tikzpicture}

\definecolor{darkgray176}{RGB}{176,176,176}
\definecolor{steelblue49130189}{RGB}{49,130,189}

\begin{axis}[
tick align=inside,
tick pos=left,
x grid style={darkgray176},
xmin=-1.1, xmax=67.1,
xtick style={color=black, opacity=0},
xtick={2, 8,16,32,64},
xticklabels={
  \(\displaystyle {2}\),
  \(\displaystyle {8}\),
  \(\displaystyle {16}\),
  \(\displaystyle {32}\),
  \(\displaystyle {64}\),
},
y grid style={darkgray176},
ymin=1.5, ymax=11,
ymode=log,
log basis y={2},
ytick style={color=black},
ytick={1,2,4, 8},
yticklabels={
  \(\displaystyle {1}\),
  \(\displaystyle {2}\),
  \(\displaystyle {4}\),
  \(\displaystyle {8}\),
},
ticklabel style={font=\footnotesize},
tick style={
    major tick length=2pt, %
    minor tick length=1pt, %
},
axis y line=left,
yticklabel style={xshift=2pt},
axis x line=bottom,
axis line style={-},
tick align=inside,
width=1.25\linewidth,
height=0.9\linewidth
]
\addplot+ [smooth, thick, steelblue49130189, opacity=0.8, mark=*, mark size=1.5, mark options={solid}]
table {%
1 10.36535
4 5.80651
8 3.64091666
16 2.573833
32 2.21338333
64 2.0799499
};
\end{axis}

\end{tikzpicture}
        \caption{Thread number}\label{subfig:setupTimeVSThread}
    \end{subfigure}}
    
    \resizebox{0.9\linewidth}{!}{
    \begin{subfigure}[b]{0.49\linewidth}
        \centering
        \begin{tikzpicture}

\definecolor{darkgray176}{RGB}{176,176,176}
\definecolor{steelblue49130189}{RGB}{49,130,189}
\definecolor{lightsteelblue158202225}{RGB}{158,202,225}

\begin{axis}[
tick align=outside,
tick pos=left,
x grid style={darkgray176},
xmin=3.6, xmax=12.4,
xtick style={color=black, opacity=0},
y grid style={darkgray176},
ylabel={Time (min)},
ymin=11.2268916666667, ymax=22.9963416666667,
ytick style={color=black},
ticklabel style={font=\footnotesize},
tick style={
    major tick length=2pt, %
    minor tick length=1pt, %
},
ylabel style={yshift=-4pt},
axis y line=left,
axis x line=bottom,
axis line style={-},
tick align=inside,
yticklabel style={xshift=2pt},
width=1.1\linewidth,
height=0.668\linewidth
]
\addplot+ [smooth, thick, lightsteelblue158202225, opacity=0.8, mark=+, mark size=1.5, mark options={solid}]
table {%
4 22.4613666666667
5 17.15885
6 16.083
7 15.1241166666667
8 13.1834333333333
9 11.9576666666667
10 11.7618666666667
11 12.20355
12 12.7963833333333
};
\end{axis}

\end{tikzpicture}
        \caption{$k$-binomial transformation }\label{subfig:insertionVSK}
    \end{subfigure}
    \begin{subfigure}[b]{0.49\linewidth}
        \centering
        \begin{tikzpicture}

\definecolor{cornflowerblue107174214}{RGB}{107,174,214}
\definecolor{darkgray176}{RGB}{176,176,176}
\definecolor{lightsteelblue158202225}{RGB}{158,202,225}
\definecolor{powderblue198219239}{RGB}{198,219,239}
\definecolor{steelblue49130189}{RGB}{49,130,189}

\begin{axis}[
tick pos=left,
x grid style={darkgray176},
xmin=0.41, xmax=4.59,
xtick style={color=black, opacity=0},
xtick={1,2,3,4},
xticklabels={0.001,0.1,1,10},
y grid style={darkgray176},
ymin=2.15, ymax=2.25,
ytick style={color=black},
ticklabel style={font=\footnotesize},
tick style={
    major tick length=2pt,
    minor tick length=1pt,
},
ytick={2.16,2.2,2.24},
yticklabels={\(\displaystyle 2.16\),
\(\displaystyle 2.20\),
\(\displaystyle 2.24\)},
ylabel style={yshift=-4pt},
axis y line=left,
axis x line=bottom,
axis line style={-},
tick align=inside,
yticklabel style={xshift=2pt},
width=1.2\linewidth,
height=0.668\linewidth
]
\addplot+ [smooth, thick, steelblue49130189, opacity=0.8, mark=*, mark size=1.5, mark options={solid}]
table {%
1 2.23953333333333
2 2.22025
3 2.1951333
4 2.16545
};
\end{axis}

\end{tikzpicture}
        \caption{Privacy budget $\varepsilon$ }\label{subfig:insertionVSEps}
    \end{subfigure}
    }
    \caption{Inserting records under different settings.  Records are inserted all at once in bulk mode and bucket by bucket in sequential mode.}\label{fig:updateTime}
\end{figure}
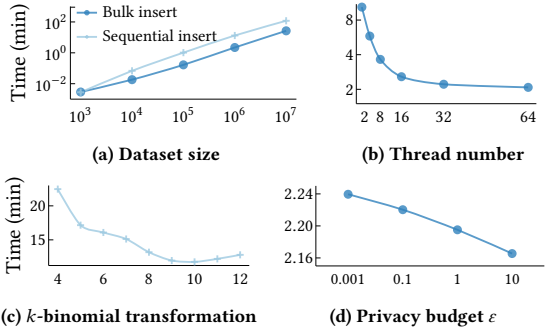

\noindent\emph{Data size and thread num on updates (\cref{subfig:setupTimeVSN,subfig:setupTimeVSThread}).}
We have tried $10^3$ to $10^7$ data insertions in both bulk and sequential modes.
It is evident that the bulk insertion mode outperforms the sequential one by orders due to computation saves on the dynamization process.
Besides, the \emph{logarithmized} running time of both modes increases linearly with the logarithmized data size, which matches the above asymptotic notions.

\noindent\emph{Component number (\cref{subfig:insertionVSK}).}
The total running time for one million sequential insertions drops as the value $k$ increases at the beginning and gradually increases when $k$ exceeds 10.
It conforms to the theoretical write overhead $(k!n)^{1/k}$ that obtains its minimum at $k=10$ when $n=10^6$ (for $k\in\mathbb{N}$).

\noindent\emph{Privacy budget $\varepsilon$ (\cref{subfig:insertionVSEps}).}
$\varepsilon$ determines the oblivious buffer size and \revise{hence} affects the setup time.
A stricter privacy budget demands a larger oblivious buffer, resulting in a longer running time.

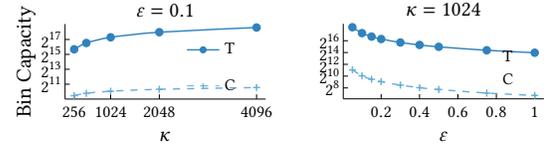
\begin{figure}[!t]
    \centering
    \resizebox{0.9\linewidth}{!}{
    \begin{subfigure}[b]{0.49\linewidth}
    \centering 
    \begin{tikzpicture}

\definecolor{cornflowerblue107174214}{RGB}{107,174,214}
\definecolor{darkgray176}{RGB}{176,176,176}
\definecolor{lightgray204}{RGB}{204,204,204}
\definecolor{steelblue49130189}{RGB}{49,130,189}

\begin{axis}[
legend cell align={left},
legend style={
  fill opacity=0.0,
  draw opacity=0,
  text opacity=1,
  at={(0.91,0.5)},
  anchor=east,
  draw=lightgray204,
  font=\footnotesize,
  legend image post style={scale=0.8}
},
log basis y={2},
tick align=outside,
tick pos=left,
title={\(\displaystyle \varepsilon=0.1\)},
title style={yshift=-6pt},
x grid style={darkgray176},
xlabel={\(\displaystyle \secpar\)},
xmin=64, xmax=4288,
xtick style={color=black},
xtick={256,1024,2048,4096},
xticklabels={
  \(\displaystyle {256}\),
  \(\displaystyle {1024}\),
  \(\displaystyle {2048}\),
  \(\displaystyle {4096}\),
},
y grid style={darkgray176},
ylabel={Bin Capacity},
ymin=512.801882219265, ymax=550000,
ymode=log,
ytick style={color=black},
ytick={2048,8192,32768,131072},
yticklabels={
  \(\displaystyle {2^{11}}\),
  \(\displaystyle {2^{13}}\),
  \(\displaystyle {2^{15}}\),
  \(\displaystyle {2^{17}}\),
},
ticklabel style={font=\footnotesize},
tick style={
    major tick length=2pt, %
    minor tick length=1pt, %
},
ylabel style={yshift=-4pt},
axis y line=left,
axis x line=bottom,
axis line style={-},
tick align=inside,
yticklabel style={xshift=2pt},
width=1.1\linewidth,
height=0.65\linewidth
]
\addplot+ [smooth, semithick, steelblue49130189, mark=*, mark size=1.5, mark options={solid}]
table {%
256 52430
512 94480
1024 160004
2048 257686
4096 398138
};
\addlegendentry{}
\addlegendimage{steelblue49130189, semithick, mark=*, mark size=1.5, opacity=1}
\addlegendentry{T}
\addplot+ [smooth, semithick, cornflowerblue107174214, dashed, mark=+, mark size=1.5, mark options={solid}]
table {%
256 704
512 876
1024 1064
2048 1272
4096 1500
};
\addlegendentry{}
\addlegendimage{cornflowerblue107174214, dashed, mark=+, mark size=1.5, opacity=1}
\addlegendentry{C}
\end{axis}

\end{tikzpicture}
    \end{subfigure}
    \begin{subfigure}[b]{0.49\linewidth}
        \centering 
        \begin{tikzpicture}

\definecolor{cornflowerblue107174214}{RGB}{107,174,214}
\definecolor{darkgray176}{RGB}{176,176,176}
\definecolor{lightgray204}{RGB}{204,204,204}
\definecolor{steelblue49130189}{RGB}{49,130,189}

\begin{axis}[
legend cell align={left},
legend style={
  fill opacity=0.0,
  draw opacity=0,
  text opacity=1,
  at={(0.91,0.4)},
  anchor=east,
  draw=lightgray204,
  font=\footnotesize,
  legend image post style={scale=0.8}
},
log basis y={2},
title style={yshift=-6pt},
tick align=outside,
tick pos=left,
title={\(\displaystyle \secpar = 1024\)},
x grid style={darkgray176},
xlabel={\(\displaystyle \varepsilon\)},
xmin=0.0025, xmax=1.0475,
xtick style={color=black},
y grid style={darkgray176},
ymin=71.0089764261811, ymax=477695.042334019,
ymode=log,
ytick style={color=black},
ytick={16,64,256,1024,4096,16384,65536}, %
yticklabels={
  \(\displaystyle {2^{4}}\),
  \(\displaystyle {2^{6}}\),
  \(\displaystyle {2^{8}}\),
  \(\displaystyle {2^{10}}\),
  \(\displaystyle {2^{12}}\),
  \(\displaystyle {2^{14}}\),
  \(\displaystyle {2^{16}}\),
},
ticklabel style={font=\footnotesize},
tick style={
    major tick length=2pt, %
    minor tick length=1pt, %
},
ylabel style={yshift=-4pt},
axis y line=left,
axis x line=bottom,
axis line style={-},
tick align=inside,
yticklabel style={xshift=2pt},
width=1.1\linewidth,
height=0.65\linewidth
]
\addplot+ [smooth, semithick, steelblue49130189, mark=*, mark size=1.5, mark options={solid}]
table {%
0.05 320006
0.1 160004
0.15 106670
0.2 80002
0.3 53336
0.4 40002
0.5 32002
0.75 21334
1 16002
};
\addlegendentry{T}
\addplot+ [smooth, semithick, cornflowerblue107174214, dashed, mark=+, mark size=1.5, mark options={solid}]
table {%
0.05 2132
0.1 1064
0.15 708
0.2 534
0.3 356
0.4 264
0.5 212
0.75 140
1 106
};
\addlegendentry{C}
\end{axis}

\end{tikzpicture}
    \end{subfigure}
    }
    \caption{Minimal bin capacity obtained through the theoretical analysis (T) ~\cite{ChanCMS19} vs. convolutional computation (C) with different security parameters $\secpar$ and privacy budgets $\varepsilon$.}
    \label{fig:binCapcities}
\end{figure}

\section{Related Work}\label{sec:relatedWork}
CryptDB~\cite{Popa2011CryptDBPC} and the subsequent MONOMI \cite{TuKMZ13} utilize special cryptographic primitives 
to allow different operations over ciphertexts with different security levels.
Arx~\cite{Poddar2016ArxAS} improves the security level by using semantically secure encryption only. 
There are numerous differential privacy data analytics systems, such as Crypt$\epsilon$~\cite{ChowdhuryW0MJ20}, 
Apex~\cite{Ge2019Apex}, PINQ~\cite{McSherry10}, and PrivateSQL~\cite{KotsogiannisTHF19}. 
This line of work assumes a fully trusted data curator (\ie, the cloud service provider) that deviates greatly from our security model. 
\revise{Meanwhile, encrypted databases based on secure multiparty computation and homomorphic encryption are also evolving. 
However, the former designs~\cite{Liagouris23SECRECY,  volgushev2019conclave, ZhangBNM23, BaterHEMR18, wang2022incshrink} require multiple noncolluding servers (which also deviate from our setting), and the latter ones~\cite{Reagen21Cheetah, SamardzicFKDDP021F1, TanLWRA21Efficient, Xuanle2022HEDA, bian2023he3db} are yet to be practically deployed on large-scale datasets.}\label{revise:R6_1}
Another line of work leverages hardware enclaves, such as Cipherbase~\cite{ArasuBEJKKRUV13}, TrustedDB~\cite{Bajaj2014TrustedDBAT}, StealthDB~\cite{Gribov2019StealthDBAS}, EnclaveDB~\cite{Priebe2018EnclaveDBAS}, Enclage~\cite{Sun2021BuildingES}, Operon~\cite{Wang2022OperonAE}.

There is also a line of work focusing on leakage suppression in encrypted databases. 
Opaque~\cite{Zheng2017OpaqueAO}, Oblix~\cite{Mishra2018OblixAE}, Obladi~\cite{Crooks0CHAA18}, POSUP~\cite{HoangOJY19}, HIRB~\cite{Roche2016APO}, and ObliDB \cite{Eskandarian2019ObliDBOQ} focus on fully hiding access pattern via different oblivious designs. 
Given significant overhead, recent work has increasingly concentrated on leakage mitigation that also maintains acceptable performance.
Frequency smoothing~\citep{MavroforakisCOK15, Grubbs2020PancakeFS, Vuppalapati2022SHORTSTACKDF} does not hide which data are accessed but the frequency to be accessed. 
\revise{\citet{maiyya2023waffle} propose a novel security notion to measure the uniformity of accesses to \emph{key-value stores}.
They also developed Waffle that achieve this notion and mitigate both access frequency and query correlation patterns without prior knowledge of data access distribution, offering tunable security-performance trade-offs as well. 
}\label{revise:R6_2}
Differential obliviousness~\citep{Kellaris2016GenericAO, WaghCM16, wagh2018differentially, ChanCMS19, PersianoY19, Persiano2022LowerBF} as another promising direction introduces the differential privacy notion into the protection of memory access patterns. 
Similarly, many works mitigate volume pattern leakage by adding differentially private noise to the result set. 
Coarse-grained obliviousness~\cite{RachidRM20, Ren2020HybrIDXNH, Sun2021BuildingES} relaxes the requirement for indistinguishable distributions of memory blocks to higher granularity (\eg, memory pages). And coarse-grained volume-hiding techniques~\cite{Kamara2018StructuredEA, Mishra2018OblixAE, Demertzis2020SEAL} follow a similar approach by retrieving results in batches (\ie, a fixed number of results), trading correctness for security by returning a fixed number of results per query, or adjustably padding the result set to its nearest power of a user-defined parameter $x$.
In addition, \citet{Wang2021DPSync} considered the update timestamp pattern and mitigated this by updating at a fixed rate or updating only if the number of updates satisfies a noised threshold that satisfies differential privacy.

\section{Conclusion}\label{sec:conclusion}
Systematic leakage mitigation in encrypted data stores has become a critical problem.
In this paper, we present \system, an enclave-assisted encrypted data store that efficiently mitigates various leakages in key-value, range-query, and dynamic workloads. 
We decorrelate queries with an almost-for-free sampling pool to enhance the security of \pancake. 
We implement a system prototype and conduct a comprehensive evaluation on extensive and diverse datasets and workloads, demonstrating excellent performance while meeting the specified security definitions.
\section*{Acknowledgements}
We thank our anonymous reviewers for their insightful feedback. This work is supported in part by the National Key Research and Development Program of China 2023YFB2904000, by the National Natural Science Foundation of China under Grant (NSFC) 62202228, U20A20178, U23A20306, 62072395, 62032021, 62206207, by the Natural Science Foundation of Jiangsu Province under Grant BK20210330, by the Fundamental Research
Funds for the Central Universities 30923011023, and by HK RGC under Grants CityU 11217620, RFS2122-1S04, R6021-20F, R1012-21, C2004-21G, and C1029-22G.  
This work is also partially supported by Alibaba Group through Alibaba Innovative Research Program.

\bibliographystyle{ACM-Reference-Format}
\bibliography{reference}

\section*{Appendix}
\label{sec:security}
We prove the security guarantees that \system can provide in this section.
Let $T$ be query rate, $\pi$ be the true distribution of range queries, $\hat{\pi}$ be the estimated distribution of range queries, 
$m$ be the bit length of bucket key, $\mathcal{C}$ be the ciphertext space of encrypted buckets, $B$ be the batch size. %
We then present the formal definitions used by ROR-CRDA security. 

\para{Pseudorandom function (PRF).} For a keyed function $F: \mathcal{K}\times\bin^*\to \bin^m$ and adversary $\adv$, we define the PRF advantage from the following two games. 
In PRF$^{\adv}_0$, \adv{} has access to an oracle that takes inputs a uniformly random key and a bit string from $\bin^*$ chosen by \adv, and outputs the PRF value evaluated on that point. 
In PRF$^{\adv}_1$, \adv{} has access to an oracle that is a lazy-sampled random function defined over the same space. 
The PRF advantage in distinguishing the above two games is: 
$$
\advantage{prf}{\adv, F}=\left|\Pr[\text{PRF}_0^{\adv}=1]-\Pr[\text{PRF}_1^{\adv}=1]\right|.
$$

\para{Authenticated encryption with associated data (AEAD).}An AEAD scheme $E=(\mathsf{Gen}, \mathsf{Enc}, \mathsf{Dec})$ consists of three algorithms, where $E.\mathsf{Gen}$ takes a security parameter as inputs and outputs elements a key $k\in\mathcal{K}$; $E.\mathsf{Enc}$ takes inputs as a key $k\in\mathcal{K}$, a plaintext $m\in\mathcal{M}$, and optionally additional associated data, outputs ciphertexts $c\in\mathcal{C}$ and an authentication tag; $E.\mathsf{Dec}$ takes inputs as a key $k\in\mathcal{K}$, a ciphertext $c\in\mathcal{C}$, authentication tag, and optionally additional associated data as inputs, and outputs a plaintext $m\in\mathcal{M}$ or $\bot$ (if the ciphertext or associated data does not match the authentication tag). 

For an AEAD scheme $E$ and adversary $\adv$, we define the real-or-random (ROR) advantage from the following two games. 
In ROR$^{\adv}_0$, \adv{} has access to an oracle $E.\mathsf{Enc}$ with a key generated by $E.\mathsf{Gen}$ and plaintexts $m$ chosen by itself. 
In ROR$^{\adv}_1$, \adv{} has access to an oracle that returns uniformly random bit strings of the same length as the ciphertexts in ROR$^{\adv}_0$ . 
The ROR advantage in distinguishing the above two games is: 
$$
\advantage{ror}{\adv, E}=\left|\Pr[\text{ROR}_0^{\adv}=1]-\Pr[\text{ROR}_1^{\adv}=1]\right|.
$$

For a distribution $\pi$ and an adversary $\adv$, who takes inputs elements from the support set of $\pi$ and outputs a bit, we define DIST$_{\pi}^{\adv}$ the game that samples $\pi$ for $T$ times and runs $\adv$ on the resulting sequence. 
For two distributions $\pi,\pi'$ with the same support set, we measure their indistinguishability from an adversary $\adv$, who takes inputs a distribution and outputs a bit, via the following measure: 
$$
\advantage{dist}{\adv, T, \pi, \pi'}=\left|\Pr[\text{DIST}_{T, \pi}^{\adv}=1]-\Pr[\text{DIST}_{T, \hat{\pi}}^{\adv}=1]\right|.
$$

\begin{definition}[Real-or-random indistinguishability under chosen-range-query-distribution attacks]\label{def:ror-crda}
The advantage of an adversary $\adv$ against an encrypted data store $\Pi$ in the real-or-random indistinguishability under chosen-range-distribution-attack (ROR-CRDA) game is defined as follows:

\begin{pchstack} 

\procedure{\textnormal{ROR-CRDA0}$^{\adv}_{T, \pi, \hat{\pi}, \alpha, \bar{t}, Z}$}{
\db\gets\adv_1(\secparam) \\
(\db', \pi_f, \delta)\gets\setup(\db, \alpha, \hat{\pi}, Z) \\ 
\pcfor t\gets 1, 2, \dots, T\pcdo \\ 
\t q\sample \pi \t\Comment{$q:=[l_t, r_t]$}\\ 
\t \Pi.\mathsf{Partition}(q) \\
\t \pcif t \bmod \bar{t} = 0\pcthen \\
\t\t \left(l_1,\dots, l_B\right)\gets \Pi.\mathsf{Batch}() \\
\t\t \pcfor i\gets 1, \dots, B\pcdo \\ 
\t\t\t \tau\left[t\right]\left[i\right]\gets(l_i, \db'\left[l_i\right])\\ 
b\gets\adv_2(\db', \tau)\\
\pcreturn b
}
\procedure{\textnormal{ROR-CRDA1}$^{\adv}_{T, \alpha, \bar{t}, Z}$}{
\db\gets\adv_1(\secparam)\\ 
\db'\gets\emptyset,\mathsf{labels}\gets\emptyset \\ 
n\gets |\db|, \mu\gets\left\lceil\alpha \cdot  \left\lceil n/Z \right\rceil \right\rceil \\ 
\pcfor i\gets 1, \dots, \mu \pcdo \\ 
\t l_i \sample\bin^m, \nu_i \sample \mathcal{C} \\ 
\t \mathsf{labels}\gets\mathsf{labels}\cup \set{l_i} \\
\t \db'\gets\db'\cup\set{(l_i, \nu_i)} \\ 
\pcfor t \gets 1, 2, \dots, T \pcdo \\ 
\t \pcif t \bmod \bar{t} \neq 0\pcthen \\
\t\t \pccontinue \\ 
\t \pcfor i\gets 1,\dots,  B \pcdo \\
\t\t l\sample \mathsf{labels}, \nu\gets\db'[l]\\ 
\t\t \tau[t][i]\gets (l, \nu )\\ 
b\gets\adv_2(\db', \tau)\\
\pcreturn b
}
\end{pchstack}

\begin{align*}
\advantage{\textnormal{ror-crda}}{\adv, \Pi}=&\left|\Pr[\textnormal{ROR-CRDA0}^{\adv}_{T, \pi, \hat{\pi}, \alpha, \bar{t}, Z}=1]\right. \\ 
&\left.-\allowbreak \Pr[\textnormal{ROR-CRDA1}^{\adv}_{T, \alpha, \bar{t}, Z}=1]\right|. 
\end{align*}

\end{definition}
\begin{theorem}
    Let $T\geq 1$, $\hat{\pi}$ be an estimate of the true distribution $\pi$ of range queries. 
    For any probabilistic polynomial-time adversary $\adv$, we have 
    $$
        \advantage{\text{ror-crda}}{\adv, \system}\leq\advantage{prf}{\adv,F}+\advantage{ror}{\adv,E}+\advantage{dist}{\adv,\pi, \hat{\pi}}, 
    $$
    where $F$ and $E$ denote the PRF and AEAD scheme the underlying \pancake uses, and \system adopts the unweighted sampling policy for query decorrelation.
\end{theorem}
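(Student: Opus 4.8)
The plan is a standard sequence of game hops in the spirit of the \pancake real-or-random argument, lifted from the key level to the bucket level, transforming $\textnormal{ROR-CRDA0}$ step by step into $\textnormal{ROR-CRDA1}$ and charging each step to exactly one of the three advantage terms. First I would observe that the bucketization and oblivious shuffle performed in $\setup$ need no separate accounting: by the obliviousness of the sorting network the physical layout of $\db'$ reveals nothing about the original order, the position map stays on the trusted client, and so to the adversary $\db'$ is simply a set of $(\text{label}, \text{ciphertext})$ pairs whose labels come from $F$ and whose payloads come from $E$. The transcript $\tau$ likewise consists only of such pairs, indexed by batch.

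\textbf{Hop 1 (PRF $\to$ random function) and Hop 2 (AEAD $\to$ random strings).} In the first hop I replace the secretly keyed PRF $F$ used to derive bucket labels by a lazily sampled random function; any adversary noticing the change yields a PRF distinguisher, so this costs at most $\advantage{prf}{\adv,F}$, and afterwards every bucket label is a uniformly random distinct element of $\bin^m$, exactly as in $\textnormal{ROR-CRDA1}$. In the second hop I replace every authenticated encryption of a bucket by a uniformly random string of the corresponding length; a distinguisher here is an ROR distinguisher for $E$, so this costs at most $\advantage{ror}{\adv,E}$, and afterwards both $\db'$ and every payload in $\tau$ are uniformly random elements of $\mathcal{C}$, matching $\textnormal{ROR-CRDA1}$. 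Both hops are routine.

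\textbf{Hop 3 (true $\to$ idealized access sequence).} After the first two hops, the only remaining difference is the distribution of the \emph{sequence of accessed labels} in $\tau$: in $\textnormal{ROR-CRDA1}$ they are drawn i.i.d.\ uniformly from the label set, whereas in the modified real game each batch is produced by partitioning a query $q\sample\pi$ into bucket requests (\cref{alg:range2kv}), inserting them into the sampling pool, and letting \pancake's batched strategy emit real and fake bucket accesses. Here I would invoke the frequency-smoothing guarantee of \pancake instantiated over buckets, with the bucket-query distribution derived from $\hat{\pi}$ via~\eqref{eq:bucketDist}: conditioned on the true bucket-query distribution equalling the estimated one, \pancake makes the marginal access frequency of every bucket replica uniform, and the \emph{unweighted} sampling pool (\cref{thm:decorrelation}) makes the individual accesses mutually independent, so the transcript is then distributed \emph{identically} to that of $\textnormal{ROR-CRDA1}$. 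The only place the \emph{true} distribution $\pi$ enters the idealized game is in the sampling of the real queries --- every fake query, every pool pad in \cref{alg:samplingPool}, and every replication parameter references $\hat{\pi}$ alone --- so a distinguisher between ``queries drawn from $\pi$'' and ``queries drawn from $\hat{\pi}$'' in this game is literally a DIST distinguisher, giving a loss of at most $\advantage{dist}{\adv,\pi,\hat{\pi}}$. Summing the three hops yields the claimed bound.

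\textbf{Main obstacle.} The delicate step is Hop 3. I expect the hard part to be arguing cleanly that (i) \pancake's per-replica uniformity, stated for independently generated \emph{key} accesses, transfers to \emph{bucket} accesses whose per-bucket probabilities are exactly the ones computed in \cref{subsec:rangeQuerySupport}; and (ii) replacing the queue by the unweighted sampling pool, together with the $\theta$-padding by $\hat{\pi}$-samples, preserves both the independence of accesses and the property that nothing in $\tau$ depends on $\pi$ except through the real queries. Once these two points are established, the reduction to $\advantage{dist}{\adv,\pi,\hat{\pi}}$ is immediate, since the idealized game becomes an instance of DIST run on the transcript-producing procedure. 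The remaining hops are textbook reductions with no subtleties.
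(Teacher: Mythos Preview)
Your proposal is correct and follows essentially the same game-hopping strategy as the paper: a PRF hop, an AEAD hop, and a distribution-mismatch hop, followed by the argument that the resulting game is identically distributed to $\textnormal{ROR-CRDA1}$ via \pancake's per-replica uniformity together with the independence supplied by the unweighted sampling pool and the oblivious shuffle. The only cosmetic difference is the direction of the distribution swap---the paper replaces $\hat{\pi}\to\pi$ in the setup whereas you replace $\pi\to\hat{\pi}$ in the query sampling---and your direction actually lines up more cleanly with the $T$-sample formulation of $\advantage{dist}{\adv,\pi,\hat{\pi}}$, but both routes yield the same bound.
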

\begin{proof} 
We consider a sequence of games:
$$\small
\textsc{G}^{\adv}_{\text{ROR-CRDA0}}\rightarrow \textsc{G}^{\adv}_1\rightarrow \textsc{G}^{\adv}_2\rightarrow \textsc{G}^{\adv}_3\rightarrow \textsc{G}^{\adv}_{\text{ROR-CRDA1}}.
$$
Game $\textsc{G}^{\adv}_1$ is the same as $\textsc{G}^{\adv}_{\text{ROR-CRDA0}}$ except we replace the PRF $F$ for bucket ID generation with a uniformly random function. 
The advantage in distinguishing them is upper bounded by the advantage of a PRF adversary: 
$$
|\Pr[\textsc{G}^{\adv}_{\text{ROR-CRDA0}}=1]-\Pr[\textsc{G}^{\adv}_1=1]|\leq \advantage{prf}{\adv,F}.
$$
Game $\textsc{G}^{\adv}_2$ is the same as $\textsc{G}^{\adv}_1$ except we replace the AEAD scheme $E$ for encrypting buckets with a uniformly random function outputting bit strings in the same ciphertext space. 
The advantage in distinguishing them is upper bounded by the advantage of an AEAD adversary: 
$$
|\Pr[\textsc{G}^{\adv}_{1}=1]-\Pr[\textsc{G}^{\adv}_2=1]|\leq \advantage{ror}{\adv,E}.
$$
Game $\textsc{G}^{\adv}_3$ is the same as $\textsc{G}^{\adv}_2$ except we replace the estimated range distribution $\hat{\pi}$ with the true range distribution $\pi$. 
The advantage could easily be derived as: 
$$
|\Pr[\textsc{G}^{\adv}_{2}=1]-\Pr[\textsc{G}^{\adv}_3=1]|\leq \advantage{dist}{\adv,\pi, \hat{\pi}}.
$$
It is easy to verify that the numbers of random bitstrings for labels and values are the same in $\textsc{G}_3$ and $\textsc{G}_{\text{ROR-CRDA1}}$. 
Therefore, the key argument remaining is that accesses of label-value pairs in $\textsc{G}_3$ are identically distributed to those in $\textsc{G}_{\text{ROR-CRDA1}}$. 
The correctness of this argument relies on the validity of the following three blocks:
\begin{enumerate}
    \item Correct estimation of the bucket access distribution. 
    As shown in~\cref{eq:bucketDist,eq:uniformBucketDist}, we map the known distribution of range queries to the distribution of bucket accesses, whose correctness can be easily verified by considering the complementary event of a range query accessing a bucket (\ie, the query range does not overlap with the bucket data range);
    \item Independence of $\tau[t][i]$. This naturally holds since bucket labels associated with the same range query are independently sampled, and bucket accesses are indistinguishable from truly random ones due to the oblivious shuffling performed during the setup stage.
    \item Uniform access distribution of each bucket replica, as argued in~\cite{Grubbs2020PancakeFS}.
\end{enumerate}
The theorem follows by combining the above terms. 
\end{proof} 
A secure PRF and AEAD exhibiting negligible advantages, along with a good estimation $\hat{\pi}$ of the true query distribution $\pi$ such that $\advantage{dist}{\adv, \pi,\hat{\pi}}$ is negligible, will restrict the advantage to be negligible for any PPT adversary in the ROR-CRDA game.

\begin{theorem}[DO$_{\text{update}}$-ODDS]Let $\varepsilon=\bigO{1}$, $k=\bigO{1}$, $\db$ and $\db'$ be two neighboring data stores, $\ops$ and $\ops'$ be two query-consistent neighboring operational sequences of, $n=|\ops|$, and $\log^*|\domain|\leq\log\log\secpar$.
Then \system satisfies $(\varepsilon\log k , \delta)$-DO$_{\text{update}}$-ODDS and achieves perfect correctness.
\end{theorem}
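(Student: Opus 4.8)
\emph{Proof plan.} I would establish the two halves of the statement --- perfect correctness and the $(\varepsilon\log k,\delta)$-DO$_{\text{update}}$ guarantee --- separately. For correctness, the plan is to argue that at every point of the operation sequence the live records are partitioned among exactly the $k$ components dictated by the unique (and always-existing) binomial decomposition of the current cardinality, so that a search, which visits all $k$ components and then discards false positives and $\infty$-padding, returns the complete result. Since $\mathsf{KWayDOMerge}$ is a faithful merge that drops only records explicitly flagged as dummy --- tombstones and versions superseded by a later timestamp, which the client proxy reconciles --- destruction-and-rebuild never alters the live multiset. The delicate point is a component being destroyed while range queries still await some of its buckets; here I would invoke the pending-query-transformation routine of~\cref{alg:DODynamizations}, showing that re-issuing each in-flight query $q$ restricted to $[\max(q.l,\mathsf{t}.l),\min(q.r,\mathsf{t}.r)]$ against the rebuilt component recovers exactly the records the destroyed buckets would have contributed --- nothing lost, nothing duplicated. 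Because $\mathsf{KWayDOMerge}$ can fall back to a plain oblivious sort whenever its noisy buffer would overflow (so the $\delta$-event is charged to privacy, not to correctness), correctness holds with probability~$1$.

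For the DO bound, fix neighboring $\ops\sim\ops'$ that agree everywhere except one update position, so at every step the two executions hold data stores of equal cardinality differing in at most one record. The plan is to walk through all server-observable events and show each is either identically distributed in the two executions or already neutralized: (i) client-side caching and sorting of the local $Z$-buffer is invisible, since only an $E$-encrypted bucket leaves the trusted proxy; (ii) the structural decisions of the $k$-binomial transform --- which components are destroyed, the bucket count of a rebuilt component, which identifiers are read or written --- depend only on cardinalities and on the PRF $F$, hence coincide up to the PRF and AEAD advantages, which a secure $F$ and $E$ render negligible and which can be folded into $\delta$; (iii) the oblivious shuffle inside $\mathsf{Bucketize}$ leaks only the array length, which matches; (iv) the fixed-rate bucket retrievals answering queries are, by the ROR-CRDA security of \system established earlier, indistinguishable from uniform sampling and independent of update content. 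What remains is exactly the access pattern of the $\mathsf{KWayDOMerge}$ invocations: because each rebuild merges at most $k$ arrays, every such invocation is a merge on inputs one element apart, whose access pattern --- via the iterative construction, i.e.\ $\lceil\log k\rceil$ compositions of the two-way $(\varepsilon,\cdot)$-DO merge of~\citet{ChanCMS19} --- is $(\varepsilon\log k,\delta)$-DO, a bound independent of $|\ops|$ and of the data scale.

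The final and, I expect, hardest step is to aggregate these pieces into a single $(\varepsilon\log k,\delta)$ guarantee for the whole execution rather than a loss growing with the number of rebuilds that touch the distinguished record --- precisely the property separating \system from the log-structured-merge design of~\cite{ChanCMS19}. I would pursue the ``$k$-fold sequential composition of DO'' line: argue that the per-insertion leakage is confined to a single bounded ($\le k$-way) merge, and that, via a coupling of the two executions after that merge, the residual one-record discrepancy stays inside one bucket and is decoupled from the bin-allocation noise of all later merges, so that the sequential composition rule for DO~\cite{Mingxun2022DOComposition} yields $(\varepsilon\log k,\delta)$ overall. Formalizing this non-accumulation is the main obstacle. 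The side hypotheses then play supporting roles: $k=\bigO1$ and $\varepsilon=\bigO1$ keep the $\log k$ factor and all hidden constants bounded, and $\log^*|\domain|\le\log\log\secpar$ ensures that sampling the differentially private interior points tagging the merge bins, together with the truncated-Laplace bin loads, achieves $\delta=\exp(-\Theta(\log^2\secpar))$ at the buffer size $\Xi$ of~\cref{subsec:dynamization}.
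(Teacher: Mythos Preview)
Your overall line matches the paper's: correctness is inherited from the underlying merge procedure, and the DO guarantee comes from composing the two-way $(\varepsilon,\delta)$-DO merge of~\cite{ChanCMS19}. The paper's actual proof is two sentences --- correctness is deferred to~\cite{ChanCMS19}, and the DO bound is obtained by observing that ``every element is involved in only $\log k$ instances of the $(\varepsilon,\delta)$-DO merge algorithm'' (i.e.\ the $\lceil\log k\rceil$ rounds of pairwise merging inside one iterative $k$-way merge) and then invoking $\log k$-fold composition. Your much more detailed correctness argument (binomial decomposition, pending-query transformation, fallback on buffer overflow) and your systematic walk through the server-observable events in (i)--(iv) are all sound and go well beyond what the paper actually writes down.

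The substantive difference is exactly the point you label the ``final and hardest step.'' The paper's proof stops at the $(\varepsilon\log k,\delta)$ bound for a \emph{single} $k$-way merge invocation; it does not argue why the distinguished record's passage through subsequent rebuilds does not compound the loss. The main body only remarks that per-insertion privacy loss is constant and that cumulative loss can be handled by sequential DO composition~\cite{Mingxun2022DOComposition}, without reconciling this with Definition~\ref{def:DODS}, where $\access_\Pi(\db,\ops)$ covers the entire execution. So you are not missing an argument the paper supplies --- you have correctly isolated the gap and proposed a coupling-based route to close it. That route is plausible but, as you note, would need real work: the one-record discrepancy does persist into every later rebuild touching its component, and ``decoupled from the bin-allocation noise of later merges'' is precisely the claim that needs justification, not an evident fact.
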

\begin{proof}
The perfect correctness can be deduced using the same argument as presented in~\cite{ChanCMS19} as the iterative $k$-way merge algorithm does not introduce any correctness loss. 
By observing that every element is involved in only $\log k$ instances of the $(\varepsilon, \delta)$-DO merge algorithm~\citep{ChanCMS19}, the theorem follows straightforwardly from the $\log k$-fold composition rule of differential privacy~\citep{Dwork2010BoostingAD}.
\end{proof}

\end{document}